\title{Metric and Path-Connectedness Properties of the Fr\'echet Distance for Paths and Graphs}
\author{Erin Chambers\thanks{Department of Computer Science,
        Saint Louis University, {\tt erin.chambers@slu.edu}}
        \and
        Brittany Terese Fasy\thanks{School of Computing,  Department of Mathematics, Montana State University {\tt  brittany.fasy@montana.edu}}
        \and
        Benjamin Holmgren \thanks{School of Computing, Montana State University {\tt  benjamin.holmgren1@student.montana.edu}}
        \and
        Sushovan Majhi \thanks{School of Information,  University of California, Berkeley {\tt  smajhi@ischool.berkeley.edu}}
        \and
        Carola Wenk \thanks{Department of Computer Science,  Tulane University {\tt  cwenk@tulane.edu}}
        }
\begin{document}

\maketitle

\begin{abstract}
    The Fr\'echet distance is often used to measure distances between paths, with
applications in areas ranging from map matching to GPS trajectory analysis to 
hand-writing recognition.  More recently, the Fr\'echet
distance has been generalized to a distance between two copies of the same
graph embedded or immersed in a metric space; this more general setting opens 
up a wide range of more complex applications in graph analysis.  In this paper, we 
initiate a study of some of the fundamental 
topological properties of spaces of paths and of graphs mapped to $\R^n$ under
the Fr\'echet distance, in an effort to lay the theoretical groundwork for
understanding how
these distances can be used in practice.  In particular, we prove whether or not these spaces, and
the metric balls therein, are path-connected.

\end{abstract}

\journal{path-connected space of paths. path-vs-map-vs-curve consistency - BTF}
\journal{check for: ``it is'', ``It is''}
\journal{BTF: we need to check for alarming phrasing, including `easy'}

\section{Introduction}
\label{sec:intro}
One-dimensional
data in a Euclidean ambient space is heavily studied in the computational geometry
literature, and is central to
applications in GPS trajectory and road network analysis~\cite{Ahmed2012, mapmatching, Guibas2011, qarta}.
One widely used distance measure on one-dimensional data is the  Fr\'echet distance,
which accounts for both geometric closeness as well as the connectivity of the paths or graphs being
compared~\cite{Frechet, Fang2021, buchin2010frechet, bos-cfdrvs-17, Buchin2020, Alt1995, Aronov2006,Agarwal2014,Driemel2012,Alt2001,Driemel2013,Cook2010,CHAMBERS2010295,Driemel2022,mapmatching,Colombe2021,Guibas2011}.
We build a theoretical foundation for these application areas by
investigating spaces of paths and graphs in~$\R^n$, including their metric and
topological properties, under the Fr\'echet distance.
The motivation for this work is simple: as practical approaches to compute the Fr\'echet distance between
paths~\cite{Aronov2006, Colombe2021} and between graphs~\cite{Driemel2022, Fang2021, Buchin2020} grow in popularity,
it is natural to inquire about the fundamental properties of such distances, in an effort to better understand
exactly what they are capturing.

We begin by defining the Fr\'echet distance between paths and
graphs.
Using open balls under the Fr\'echet distance to generate
a topology, we study the metric and topological properties of the
induced spaces.
In particular, we work with three classes of
paths: the space~$\cpaths$ of all paths in~$\R^n$, the space
$\epaths$ of all paths in~$\R^n$ that are embeddings (i.e., maps that are
homeomorphisms onto the image), and the space $\ipaths$ of all paths in~$\R^n$
that are immersions (local embeddings).
See \figref{path-sets} for examples of paths in~$\R^2$.
In addition, we study the three analogous spaces of graphs:
the sets~$\cgraphs$,~$\igraphs$, and~$\egraphs$ of continuous maps, immersions, and embeddings of graphs, respectively.
This paper establishes the core metric and topological properties of the
Fr\'echet distance on graphs and
paths in Euclidean~space.

\begin{figure}[bht]
    \centering
    \begin{subfigure}[b]{0.12\textwidth}
        \includegraphics[width=\textwidth]{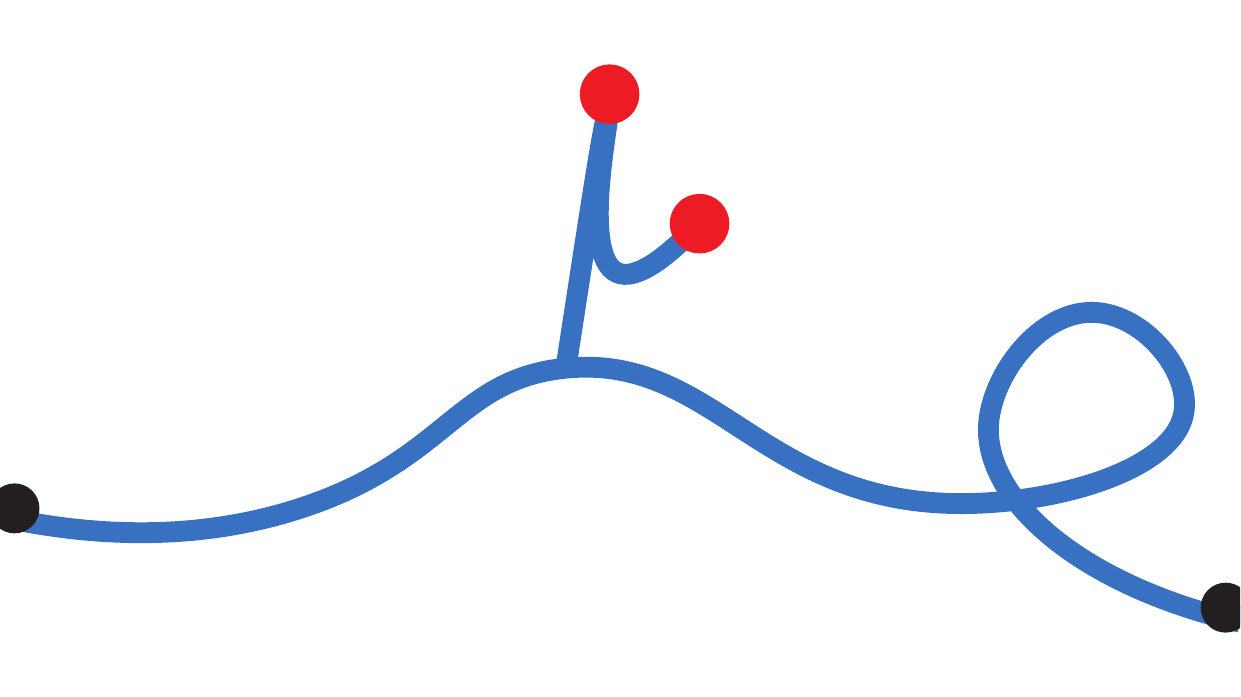}
        \caption{Continuous}
        \label{fig:imm-path-interp}
    \end{subfigure}
    ~ %add desired spacing between images, e. g. ~, \quad, \qquad, \hfill etc.
    \begin{subfigure}[b]{0.12\textwidth}
        \includegraphics[width=\textwidth]{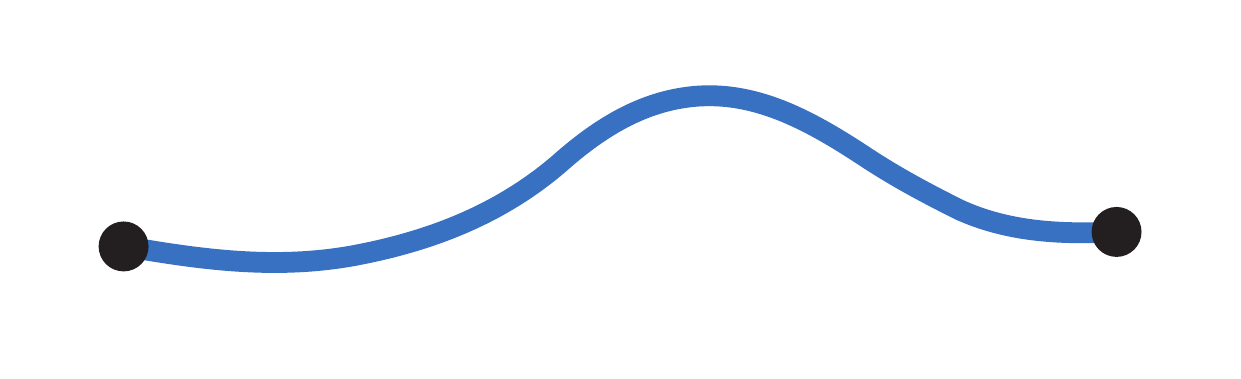}
        \caption{Embedding}
        \label{fig:imm-delta}
    \end{subfigure}
    ~
    \begin{subfigure}[b]{0.12\textwidth}
        \includegraphics[width=\textwidth]{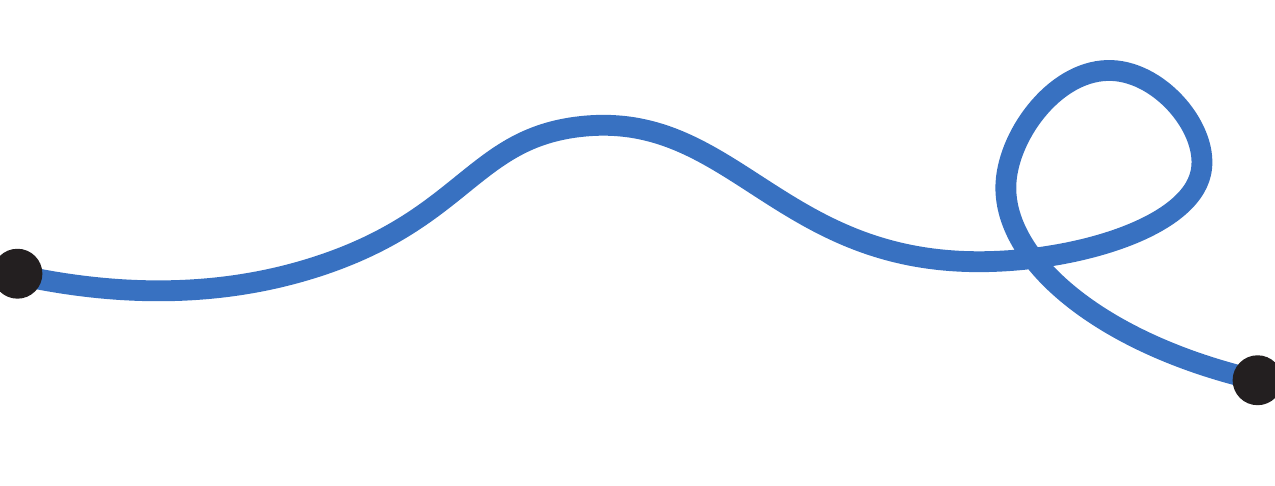}
        \caption{Immersion}
        \label{fig:imm-inflate}
    \end{subfigure}
    \caption{Example of paths continuously mapped, embedded, and immersed in
    $\R^2$.
    The space of continuous maps allows arbitrary self-intersection on a path including
    backtracking (which occurs at the two red points);
    embeddings must induce homeomorphisms onto their
    image; and immersions are locally embeddings.
    }\label{fig:path-sets}
\end{figure}

\section{Background}\label{sec:background}
In this section, we establish the definitions and notation from geometry and
topology used throughout. We assume basic knowledge of concepts in topology.
For common definitions central to this paper, we refer readers
to \appendref{basics}, or for greater detail, to~\cite{Edelsbrunner,rogue-1}.

\begin{definition}[Types of Maps]
    Let $\X$ and $\Y$ be topological spaces.
    A map $\alpha \colon \X \to \Y$ is called \emph{continuous} if for each open
    set~$U \subset \Y$, $\alpha^{-1}(U)$ is open in $\X$. We call~$\alpha$ an
    \emph{embedding} if
    $\alpha$ is injective. Equivalently, an \emph{embedding} is a continuous
    map that is
    homeomorphic onto its image.  If $\alpha$ is locally an embedding, then we say
    that~$\alpha$ is an~\emph{immersion}.
\end{definition}

In particular, a continuous map $\gamma \colon [0,1] \to \R^n$ is called a
\emph{path} in $\R^n$.
We call a path $\gamma \colon [0,1] \to \R^n$
\emph{rectifiable} if $\gamma$ has finite \emph{length} (see \defref{length} in
\appendref{distdefs}).  Moreover, we call a graph~$G$
rectifiable if there exists a finite cover of $G$ such that
every element in the cover is a rectifiable~path.

\paragraph{Paths in $\R^n$}
Letting $\elems{\cpaths}$ denote
the set of all rectifiable paths in $\R^n$,
we now define the path Fr\'echet distance.
\begin{definition}[The Path Fr\'echet Distance~\cite{Alt1995}]
    The \emph{Fr\'echet distance}
    $\comparefcn{d_{FP}}{\elems{\cpaths}}$
    between~$\apath_1,\apath_2 \in \elems{\cpaths}$ is defined~as:

    \[   d_{FP}(\apath_1,\apath_2) := \inf_{r \colon [0,1] \to [0,1]} \max_{t \in [0,1]} ||
        \apath_1(t) -
        \apath_2(r(t)) ||_2,
    \]
    where $r$ ranges over all
    homeomorphisms such that~$r(0)=0$, and $||\cdot||_2$ denotes the Euclidean norm.
\end{definition}

\paragraph{Graphs Mapped to $\R^n$}
We define a
\emph{graph}~$G=(V,E)$ as a finite set of vertices $V$ and a finite set of edges~$E$.
Self-loops and multiple edges between a pair of vertices are
allowed.\footnote{Some references would call this a \emph{multi-graph}, but for
simplicity, we just use the term \emph{graph}.}
We topologize a graph by thinking of it as a CW complex;
see \appendref{graphs}.
If $\phi \colon G \to \R^d$ is a map, then we call
$(G,\phi)$ a graph-map pair.
We extend the path Fr\'echet distance to the Fr\'echet distance between
graphs continuously mapped into $\R^n$:

\begin{definition}[Graph Fr\'echet Distance]\label{def:graphfrech}
    Let~$(G,\phi),(H,\psi)$ be continuous, rectifiable graph-map pairs.
    We define the Fr\'echet distance between $(G,\phi)$
    and~$(H,\psi)$ by minimizing over all homeomorphisms:\footnote{Other generalizations
    of the Fr\'echet distance minimize over all ``orientation-preserving''
    homeomorphisms, which can be defined in several ways for stratified spaces,
    and sometimes adding an orientation is not natural.
    Thus, we drop this requirement in our definition.}
    \[
        d_{FG}\left( (G,\phi),(H,\psi)\right) :=
            \begin{cases}
                \inf_{h} || \phi - \psi\circ h ||_{\infty} & G \cong H.\\
                \infty & \text{otherwise}.\\
            \end{cases}
    \]
    For simplicity of exposition, when $G \cong H$,
    we write the LHS of this equation as~$d_{FG}(\phi,\psi)$. Furthermore,
    defining the infimum over an emptyset to be $\infty$,
    the graph Fr\'echet distance is given by the following equation:
    \[
        d_{FG}\left( (G,\phi),(H,\psi)\right) :=
            \inf_{h} || \phi - \psi \circ h ||_{\infty},
    \]
    where the infimum is taken over all homeomorphisms~$h \colon G \to H$.
\end{definition}

Note that if $G=H$ and $\phi$ is a reparameterization of~$\psi$,
then~$d_{FG}(\phi,\psi)=0$.

\begin{obs}[Paths as Graphs]\label{obs:path-graph}
    If $G=[0,1]$ and~$\alpha,\beta \colon [0,1] \to \R^n$ are paths,
    then the relationship between path and graph Fr\'echet distances
    is as follows:
    \[
        d_{FG} \left( \alpha, \beta \right) =
        \min \left\{ d_{FP}(\alpha,\beta), d_{FP}(\alpha,\beta^{-1}) \right\},
    \]
    where $\beta^{-1}\colon I \to \R^n$ is defined by $\beta^{-1}(t) = \beta(1-t)$.
\end{obs}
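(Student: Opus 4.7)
The plan is to partition the self-homeomorphisms of the interval $[0,1]$ into two orientation classes and show that the graph Fr\'echet infimum taken over each class realizes one of the two path Fr\'echet quantities on the right-hand~side.

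First, I would observe that since $G=H=[0,1]$ we trivially have $G\cong H$, so \defref{graphfrech} gives
\[
d_{FG}(\alpha,\beta) \;=\; \inf_{h}\ \sup_{t\in[0,1]}\|\alpha(t)-\beta(h(t))\|_2,
\]
with $h$ ranging over all homeomorphisms $[0,1]\to[0,1]$. Because a continuous injection on an interval is strictly monotone, each such $h$ satisfies either $h(0)=0$ (orientation-preserving) or $h(0)=1$ (orientation-reversing), and these two classes are disjoint. Partitioning the infimum along this dichotomy produces a minimum of two sub-infima; replacing $\sup$ by $\max$ (legal since a continuous real-valued function on the compact interval attains its supremum) puts each sub-infimum in exactly the form defining~$d_{FP}$.

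The orientation-preserving sub-infimum is literally $d_{FP}(\alpha,\beta)$. For the orientation-reversing sub-infimum, I would introduce the flip $\sigma\colon[0,1]\to[0,1]$, $\sigma(s)=1-s$, and note that $h\mapsto \sigma\circ h$ is an involutive bijection between the two orientation classes, so every orientation-reversing $h$ has the form $\sigma\circ r$ for a unique orientation-preserving~$r$. Then $\beta\circ h=\beta\circ\sigma\circ r=\beta^{-1}\circ r$, and as $r$ varies over orientation-preserving homeomorphisms the sub-infimum becomes $d_{FP}(\alpha,\beta^{-1})$. Taking the minimum of the two sub-infima yields the claimed identity.

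The main (admittedly modest) obstacle I anticipate is purely structural: carefully justifying that every self-homeomorphism of $[0,1]$ falls into exactly one of these two orientation classes and that the involution $h\mapsto \sigma\circ h$ really does biject them, together with confirming that the $L^\infty$ norm used in \defref{graphfrech} agrees with the pointwise maximum used in the path Fr\'echet definition on this compact domain. Once these points are in place the identification is immediate and the observation follows.
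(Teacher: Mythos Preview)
Your proposal is correct. The paper states this observation without proof, so there is no argument to compare against; your approach---splitting the self-homeomorphisms of $[0,1]$ into the orientation-preserving and orientation-reversing classes via strict monotonicity, then using the flip $\sigma(s)=1-s$ to biject the latter onto the former and rewrite $\beta\circ h$ as $\beta^{-1}\circ r$---is the standard and complete justification.
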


\section{Metric Properties}
We now address the question:
Is this distance
a metric? If not, can it be metrized?
A well-known known property of the path Fr\'echet distance is that it
is a pseudo-metric~\cite{Frechet,Alt1995}.
That is, it satisfies all metric properties except for separability.
We proof  this property for $d_{FG}$.

\begin{theorem}[Metric Properties of $d_{FG}$]\label{thm:cont-metric-prop}
    $d_{FG}$ is an extended pseudo-metric that does not satisfy
    separability.  When restricted to a homeomorphism class of graphs, $d_{FG}$
    is a pseudo-metric.
\end{theorem}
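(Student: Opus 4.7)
The plan is to verify each pseudo-metric axiom directly from \defref{graphfrech}, exhibit a witness to the failure of separability, and then argue finiteness on a single homeomorphism class.

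Non-negativity and reflexivity ($d_{FG}(\phi,\phi)=0$) are immediate, since $\|\cdot\|_\infty \ge 0$ and the identity homeomorphism $\mathrm{id}_G$ is an admissible choice for the infimum. For symmetry, if $G \not\cong H$ both sides equal $\infty$; otherwise, the map $h \mapsto h^{-1}$ is a bijection between homeomorphisms $G \to H$ and $H \to G$, and $\|\phi - \psi\circ h\|_\infty = \|\phi \circ h^{-1} - \psi\|_\infty$ because $h$ is a bijection of underlying point sets, so the two infima agree.

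The triangle inequality is the heart of the argument. If some pair among $(G_1,\phi_1),(G_2,\phi_2),(G_3,\phi_3)$ is not homeomorphic then the right-hand side contains $\infty$ and the inequality is trivial. Otherwise all three graphs lie in one homeomorphism class; for any $\varepsilon>0$ I pick homeomorphisms $h_{12}\colon G_1\to G_2$ and $h_{23}\colon G_2 \to G_3$ whose realized values lie within $\varepsilon$ of $d_{FG}(\phi_1,\phi_2)$ and $d_{FG}(\phi_2,\phi_3)$, respectively. Then $h_{23}\circ h_{12}$ is an admissible homeomorphism from $G_1$ to $G_3$, and the pointwise triangle inequality for the Euclidean norm, combined with the fact that precomposition by the bijection $h_{12}$ preserves the supremum, yields
\[
\|\phi_1 - \phi_3\circ h_{23}\circ h_{12}\|_\infty \le \|\phi_1 - \phi_2\circ h_{12}\|_\infty + \|\phi_2 - \phi_3\circ h_{23}\|_\infty.
\]
Taking the infimum on the left and letting $\varepsilon \to 0$ completes the inequality.

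For the failure of separability, I take any graph $G$ admitting a non-identity self-homeomorphism $h$ (for instance, a single edge under the orientation-reversing homeomorphism $t\mapsto 1-t$ of $[0,1]$) and any map $\phi$; then $d_{FG}(\phi,\phi\circ h)=0$ is witnessed by $h^{-1}$, yet $\phi \neq \phi\circ h$ as maps in general, consistent with \obsref{path-graph}. Finally, when $G \cong H$, compactness of the finite CW complex $G$ forces $\phi(G)$ and $\psi(H)$ to be bounded subsets of $\R^n$, so $\|\phi - \psi\circ h\|_\infty < \infty$ for any admissible $h$; hence the infimum is finite and all axioms specialize to those of an ordinary pseudo-metric on the homeomorphism class. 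The main subtlety is the bookkeeping in the triangle inequality, where one must carefully exploit bijectivity of $h_{12}$ to convert the supremum over $G_1$ into a supremum over $G_2$ so that the two $\varepsilon$-optimal witnesses can be combined cleanly.
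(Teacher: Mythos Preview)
Your proof is correct and follows the same axiom-by-axiom strategy as the paper. Your triangle-inequality argument via $\varepsilon$-optimal witnesses $h_{12},h_{23}$ and their composition is cleaner than the paper's infimum manipulation, and you supply an explicit separability counterexample (reversal on $[0,1]$) and a compactness-based finiteness argument, whereas the paper only asserts non-separability and bounds finiteness via a Hausdorff-distance remark; these are refinements of presentation rather than a different route.
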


\begin{proof}
    We first prove that $d_{FG}$ is an extended
    pseudo-metric (see \defref{pseudometric} in \appendref{distdefs}).

    Identity: Taking $h$ to be
    the identity map in \defref{graphfrech}, we find~$d_{FG}((G,\agraph_1),
    (G,\agraph_1))=0$.

    Symmetry:
    Consider~$d_{FG}(\agraph_1, \agraph_2)$.
    If $G \not\cong H$, then no homeomorphism $h \colon G \to H$ exists.  Likewise, no homeomorphism $h'
    \colon H \to G$ exists.  And, so,
    $$d_{FG}((G,\agraph_1), (H,\agraph_2)))=\infty = d_{FG}((H,\agraph_2),
    (G,\agraph_1))).$$
    Otherwise, since $h$ is a homeomorphism, it is invertible.  Thus, we can rewrite this
    as:
    $$ d_{FG}(\agraph_1, \agraph_2) =
    \underset{h^{-1}}{\inf} || \phi_1 \circ h^{-1} -
    \phi_2||_{\infty}=d_{FG}(\agraph_2, \agraph_1).$$

    Subadditivity (the triangle inequality):
    Consider~$d_{FG}((G_1,\phi_1),(G_2,\phi_2)) +
    d_{FG}((G_2,\phi_2),(G_3,\phi_3))$.  If $G_1 \not\cong G_2$, then
    $d_{FG}((G_1,\phi_1),(G_2,\phi_2))=\infty$, and we are done.
    A symmetric argument follows for $G_2 \not\cong
    G_3$. Thus, we assume $G_1 \cong G_2 \cong G_3$.
    Using the definition of Fr\'echet distance and the fact that the infimum is
    taken over homeomorphisms, we obtain:
    \begin{align*}
        &d_{FG}(\phi_1,\phi_2) + d_{FG}(\phi_2,\phi_3)\\
        &\qquad = \inf_{h'} || \phi_1 - \phi_2 \circ h'  ||_{\infty}
            + \inf_{h''} ||  \phi_2 - \phi_3\circ h'' ||_{\infty}.\\
        &\qquad \geq \inf_{h'} || \phi_1 - \phi_2 \circ h'  ||_{\infty}
            + \inf_{h,h'} ||  \phi_2 \circ h' - \phi_3\circ h ||_{\infty}\\
        &\qquad = \inf_{h,h'} || \phi_1 + (\phi_2 \circ h'-\phi_2 \circ h') - \phi_3\circ h ||_{\infty}\\
        &\qquad = \inf_{h} || \phi_1 - \phi_3\circ h ||_{\infty}\\
        &\qquad = d_{FG}(\phi_1,\phi_3).
    \end{align*}
    And so, we conclude
    that~$d_{FG}$ satisfies subadditivity.

    Noting that if $G \not\cong H$ that $d_{FG}\left( (G,\phi),(H,\psi)\right) =
    \infty$, we conclude that~$d_{FG}$ is an extended pseudo-metric.
    However, the graph Fr\'echet distance between homeomorphic graphs
    is at most the Hausdorff distance between the
    images of the two maps.
    Thus, when restricted to a homeomorphism class of graphs, $d_{FG}$
    is a pseudo-metric.
\end{proof}

The only metric property not satisfied is separability.
\journal{proof of not separable:
    Finally, we prove $d_{FG}$ does not satisfy separability.
    Let~$G$ be a graph such that $G \equiv \S^1$. Let~$\agraph:G \to \R^2$ be a
    homeomorphism such that $\agraph(G)=\S^1$.
    Let the function~$f \colon \S^1 \to \S^1$
    be defined by~$f\left( e^{i\theta}\right) =  e^{i(\theta+\pi)} $.
    Then, we know
    that~$\phi \circ f \neq \phi$, but~$d_{FG}(\phi,\phi\circ f)=0$.
}
In order to metrize this pseudo-metric, we define~$\cgraphs(G)$ to be the
the set of equivalence classes of continuous, rectifiable maps $G \to \R^n$,
where two maps,~$\agraph_1$ and~$\agraph_2$, are equivalent
if and only
if~$d_{FG}(\agraph_1, \agraph_2) = 0$. We write~$[\agraph_i]$ to denote the equivalence
class of maps containing $\agraph_i$. We define two subspaces of $\cgraphs(G)$: those representing
immersions and embeddings, denoted~$\igraphs(G)$ and $\egraphs(G)$,
respectively.
Note that~$\egraphs(G) \subsetneq \igraphs(G) \subsetneq \cgraphs(G)$.  Let $\cgraphs$
denote the induced set of equivalence classes of all graph-map
pairs~$(G,[\agraph])$ such that~$\eqgraph{} \in
\cgraphs(G)$.  Similiarly, we define $\igraphs$ and $\egraphs$, and
note~$\egraphs \subsetneq \igraphs \subsetneq \cgraphs$. Hence,

\begin{cor}[Metric Extension for Graphs]\label{cor:metric-graph}
    For every graph~$G$, the
    graph Fr\'echet distance is a metric on the quotient spaces
    $\cgraphs(G)$, $\igraphs(G)$, and~$\egraphs(G)$.
    Moreover, the graph Fr\'echet distance is an extended metric
    on~$\cgraphs$,~$\igraphs$, and~$\egraphs$.
\end{cor}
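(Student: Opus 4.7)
The plan is to apply the standard Kolmogorov (metric identification) quotient construction to upgrade the extended pseudo-metric from Theorem~\ref{thm:cont-metric-prop} to an (extended) metric. The only axiom missing from $d_{FG}$ is identity of indiscernibles, and since we will quotient by precisely the equivalence relation induced by zero distance, that axiom holds on the quotient by construction; the real work is to verify that the construction is legitimate and that it carries over to the extended setting and to the immersion and embedding subspaces.

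Fix a graph $G$. First I would verify that the relation $\phi_1 \sim \phi_2 \Leftrightarrow d_{FG}(\phi_1,\phi_2) = 0$ on continuous, rectifiable maps $G \to \R^n$ is an equivalence relation: reflexivity follows from the identity portion, symmetry from the symmetry of $d_{FG}$, and transitivity from the triangle inequality, all established in Theorem~\ref{thm:cont-metric-prop}. Next I would show that $d_{FG}$ descends to a well-defined function on the quotient $\cgraphs(G)$: given $\phi_1 \sim \phi_1'$ and $\phi_2 \sim \phi_2'$, two applications of the triangle inequality (together with symmetry) yield $d_{FG}(\phi_1,\phi_2) \leq d_{FG}(\phi_1',\phi_2')$, and swapping the roles of primed and unprimed representatives gives equality. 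Non-negativity, symmetry, and the triangle inequality then transfer verbatim from representatives to equivalence classes, while identity of indiscernibles holds by the very definition of those classes. Hence $d_{FG}$ is a metric on $\cgraphs(G)$, and because $\egraphs(G) \subsetneq \igraphs(G) \subsetneq \cgraphs(G)$ as subsets, the subspaces inherit this metric automatically.

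For the extended metric claim on $\cgraphs$, $\igraphs$, and $\egraphs$: by \defref{graphfrech}, the distance between two pairs whose underlying graphs lie in distinct homeomorphism classes is $\infty$, so identity of indiscernibles holds trivially across classes, while within a single homeomorphism class the fixed-$G$ argument above applies verbatim. The extended-valued symmetry and triangle inequality follow from Theorem~\ref{thm:cont-metric-prop} with the usual conventions for $\infty$. I expect the main obstacle, to the extent there is one, to be bookkeeping rather than substance: ensuring the well-definedness argument does not silently conflate distinct homeomorphism classes of graphs, and confirming that no identifications beyond those already made in $\cgraphs(G)$ are required to metrize the immersion and embedding subspaces, since both are simply inherited metric subspaces of $\cgraphs(G)$.
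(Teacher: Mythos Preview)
Your proposal is correct and matches the paper's approach: the paper states this corollary as an immediate consequence (introduced by ``Hence,'') of the metric-identification quotient construction it sets up right before the statement, and you have simply written out the standard Kolmogorov quotient argument in full detail. The paper offers no separate proof beyond that construction, so your level of detail exceeds what the paper provides while following the same route.
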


Similarly, we consider paths in~$\R^n$: in particular, $\cpaths$ is the
set of equivalences classes of $\elems{\cpaths}$ up to orientation-preserving
reparameterization.  Equivalently, for
$\apath_1, \apath_2 \in \elems{\cpaths}$, $\apath_1$ is equivalent to $\apath_2$
iff~$d_{FP}(\apath_1, \apath_2) = 0$.
Likewise,~$\epaths$ and $\ipaths$ are the subspaces  of embeddings and
immersions.
Note that~$\epaths \subsetneq \ipaths \subsetneq \cpaths$.
We topologize these spaces using the open ball topology (\appendref{distdefs}).
Again, by construction, we~obtain:

\begin{cor}[Metric Properties of $d_{FP}$]
    The path Fr\'echet distance is a
    metric on $\cpaths, \ipaths$ and~$\epaths$.
\end{cor}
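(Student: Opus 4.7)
The plan is to mirror the argument given for \corref{metric-graph}, leveraging the fact that the path Fr\'echet distance is already known to be a pseudo-metric on $\elems{\cpaths}$~\cite{Frechet,Alt1995}. The only metric axiom at issue is separability, and the quotient construction defining $\cpaths$ was designed precisely to enforce it. So the task reduces to verifying (i) that $d_{FP}$ descends to a well-defined function on equivalence classes, (ii) that the resulting function satisfies all the metric axioms, and (iii) that the subspace restrictions to $\ipaths$ and $\epaths$ remain metrics.

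For (i), I would first observe that the equivalence relation defining $\cpaths$ -- namely $\apath_1 \sim \apath_2$ iff $d_{FP}(\apath_1,\apath_2)=0$ -- coincides with equivalence up to orientation-preserving reparameterization, since the infimum in the definition of $d_{FP}$ is over all orientation-preserving homeomorphisms $r$ with $r(0)=0$. Well-definedness of $d_{FP}$ on equivalence classes then follows from the triangle inequality: if $\apath_1 \sim \apath_1'$ and $\apath_2 \sim \apath_2'$, then
\[
d_{FP}(\apath_1',\apath_2') \leq d_{FP}(\apath_1',\apath_1) + d_{FP}(\apath_1,\apath_2) + d_{FP}(\apath_2,\apath_2') = d_{FP}(\apath_1,\apath_2),
\]
and symmetrically for the reverse inequality.

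For (ii), the identity, symmetry, and subadditivity axioms transfer immediately from the pseudo-metric on $\elems{\cpaths}$ to the quotient $\cpaths$. Separability is then automatic: if $d_{FP}([\apath_1],[\apath_2])=0$, then $\apath_1 \sim \apath_2$ by the very definition of the equivalence relation, so $[\apath_1] = [\apath_2]$.

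For (iii), since $\epaths \subsetneq \ipaths \subsetneq \cpaths$, the restriction of the metric $d_{FP}$ on $\cpaths$ to each subspace remains a metric. I do not anticipate a main obstacle: the argument is essentially a bookkeeping exercise that exploits the quotient construction. The one subtlety worth being careful about is confirming that the ``$d_{FP}=0$'' equivalence genuinely agrees with reparameterization equivalence, so that the definition of $\cpaths$ used here matches the one stated in the paragraph preceding the corollary.
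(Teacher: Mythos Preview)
Your proposal is correct and matches the paper's approach. The paper does not give an explicit proof of this corollary at all---it simply states ``Again, by construction, we obtain'' after defining $\cpaths$, $\ipaths$, and $\epaths$ as quotient spaces, relying on the same reasoning used for \corref{metric-graph}; your write-up faithfully unpacks that ``by construction'' argument (well-definedness via the triangle inequality, separability by design of the quotient, and restriction to subspaces).
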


\section{Path-Connectedness Property}\label{sec:path-conn-sec}
We now examine path-connectedness properties. See \defref{path-con}
and \defref{ball-con} of \appendref{path} for definitions of path-connectivity.

\subsection{Continuous Mappings}\label{sec:cont}
We start with the most general spaces of paths and graphs:
the continuous, rectifiable  maps into $\R^n$.
In Euclidean spaces, linear interpolation is a useful tool because it defines
the shortest paths between two points.  In function spaces, linear interpolation
is also nice:

\begin{definition}[Linear Interpolation]\label{def:interp-graphs}
    Let $G$ be a graph and $\agraph_0, \agraph_1 \colon G \to \R^n$ be
    continuous, rectifiable maps.
    The \emph{linear interpolation} from~$\agraph_{0}$ to $\agraph_{1}$ is
    the map~$\Gamma \colon [0,1] \to \cgraphs(G)$ sending $t\in[0,1]$
    to~$(G,\agraph_t)$,~where:
    \begin{equation}\label{eq:lin-interp}
        \agraph_t := (1-t)\agraph_0 + t(\agraph_1 \circ h_*).
    \end{equation}
    For ease of notation, we sometimes write $\Gamma_t := \Gamma(t)$.
\end{definition}

Note that $(1-t)\agraph_0 + t\agraph_1$ is a linear combination of~$\agraph_0$ and
$\agraph_1$ (using $c_0=1-t$ and $c_1=t$ in \defref{combo}).
Thus,~$\Gamma$ is a continuous family of linear combinations
of the maps
$\agraph_0$ and~$\agraph_1$; we show $\Gamma$ is continuous in \lemref{gamma-cont}.
in \appendref{lin-interp}. If $G=[0,1]$, the linear interpolation between
graphs is simply linear interpolation between paths.
For an example of linear interpolation between graphs, see
\figref{interp} in \appendref{lin-interp}.

However, linear interpolation is not well-defined in~$\cgraphs$, as we could
have~$\agraph_1,\agraph_2 \in \eqgraph{} \in \cgraphs(G)$.
In fact,~$\Gamma(t; \agraph_1,\agraph_2) = \Gamma(t; \agraph_1,\agraph_3)$ if and only if
$\agraph_1=\agraph_2$.

\begin{definition}[Family of Interpolations]
    Let $G$ be a graph and $\eqgraph{0}, \eqgraph{1} \in \cgraphs(G)$.
    We define~$\mapclass{\eqgraph{0}}{\eqgraph{1}}$ to be the set of
    all linear interpolations
    between elements of $\eqgraph{0}$ and of $\eqgraph{1}$.
\end{definition}

We now demonstrate the existence of a family of interpolations
between any two equivalence classes within~$(\cgraphs(G), d_{FG})$,
proving path-connectivity.

\begin{theorem}[Continuous Maps of Graphs]\label{thm:graph-cont}
    For every graph~$G$, the extended metric space $(\cgraphs(G), d_{FG})$ is path-connected.
    Moreover, the connected components of $(\cgraphs,d_{FG})$
    are in one-to-one correspondence with the homeomorphism classes of~graphs.
\end{theorem}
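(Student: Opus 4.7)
The plan is to build an explicit continuous path between any two classes in $\cgraphs(G)$ using the linear interpolation from \defref{interp-graphs}, and then deduce the second claim by observing that a homeomorphism of graphs merely relabels the domain without altering the Fr\'echet equivalence class.

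For path-connectedness of $(\cgraphs(G), d_{FG})$, I would fix representatives $\phi_0, \phi_1$ of two equivalence classes $\eqgraph{0}, \eqgraph{1} \in \cgraphs(G)$ and form the interpolation $\Gamma_t := (1-t)\phi_0 + t\phi_1$ for $t \in [0,1]$. Each $\Gamma_t$ is continuous as a linear combination of continuous maps, and the triangle inequality gives the pointwise estimate $\|\Gamma_t(a) - \Gamma_t(b)\| \le (1-t)\|\phi_0(a) - \phi_0(b)\| + t\|\phi_1(a) - \phi_1(b)\|$; taking the supremum over partitions bounds the length of $\Gamma_t$ by $(1-t)\,\mathrm{length}(\phi_0) + t\,\mathrm{length}(\phi_1) < \infty$, so rectifiability is preserved and each $\Gamma_t$ represents an element of $\cgraphs(G)$. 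To see that $t \mapsto [\Gamma_t]$ is continuous in $d_{FG}$, I would use the identity homeomorphism on $G$ as a feasible witness in the infimum of \defref{graphfrech}, obtaining $d_{FG}(\Gamma_s, \Gamma_t) \le \|\Gamma_s - \Gamma_t\|_\infty = |s-t|\,\|\phi_0 - \phi_1\|_\infty$. Hence $t \mapsto [\Gamma_t]$ is Lipschitz and connects $\eqgraph{0}$ to $\eqgraph{1}$ inside $\cgraphs(G)$.

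For the second statement, I would argue that the path-components of $\cgraphs$ are precisely the subsets arising from a single homeomorphism class of underlying graphs. Within a class, if $G \cong H$ via $h\colon G \to H$, then the pairs $(G, \phi)$ and $(H, \phi \circ h^{-1})$ satisfy $d_{FG} = 0$ (take $h$ itself as witness in the infimum), so they represent the same element of $\cgraphs$; composing this identification with the interpolation path from the first part yields a continuous path in $\cgraphs$ between any two graph-map pairs whose underlying graphs are homeomorphic. Between classes, if $G \not\cong H$, then $d_{FG}((G,\phi),(H,\psi)) = \infty$ by definition. For any continuous $\gamma\colon [0,1] \to \cgraphs$, open-ball continuity at each $t_0$ forces $d_{FG}(\gamma(t),\gamma(t_0)) < 1$ on a neighborhood of $t_0$, so the set of $t$ for which $\gamma(t)$ lies in a fixed homeomorphism class is both open and closed; connectedness of $[0,1]$ then pins $\gamma$ to a single class.

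The bulk of the work is bookkeeping in the quotient: verifying that $\Gamma_t$ is well-defined as an element of $\cgraphs(G)$ and that the Lipschitz bound descends to equivalence classes. The step requiring a bit more care is the separation argument for non-homeomorphic graphs, which uses the extended-metric open-ball topology to produce a clopen partition of $[0,1]$ along any candidate continuous path.
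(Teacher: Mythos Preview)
Your argument is correct and follows the same strategy as the paper: linear interpolation provides the path in $\cgraphs(G)$, continuity is established via a Lipschitz bound on $t\mapsto[\Gamma_t]$ (the paper isolates this as \lemref{gamma-cont}), and the second claim rests on $d_{FG}=\infty$ between non-homeomorphic graphs. Your treatment is in fact more thorough than the paper's in two places: you explicitly verify rectifiability of $\Gamma_t$ (which the paper omits), and you give a genuine clopen argument for why connected components cannot straddle homeomorphism classes, whereas the paper disposes of this with a single ``vacuously'' sentence.
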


\begin{proof}
    Let $\eqgraph{0}, \eqgraph{1} \in \cgraphs(G)$.
    Let~$\Gamma \in \mapclass{\eqgraph{0}}{\eqgraph{1}}$.
    By
    \lemref{gamma-cont} in \appendref{lin-interp},~$\Gamma$ is continuous,
    and so~$(\cgraphs(G), d_{FG})$ is path-connected.

    Moreover, suppose~$(G,\eqgraph{0}), (H,\eqgraph{1}) \in \cgraphs$ for the graphs $G, H$ which are not homeomorphic.
    Then,~$d_{FG}((G,\eqgraph{0}), (H,\eqgraph{1})) = \infty$, and connected components
    of the extended metric space $\cgraphs$ are vacuously in one-to-one correspondence with
    homeomorphism classes of graphs.
\end{proof}

Setting $G = [0,1]$, an identical proof holds for paths.

\begin{cor}[Continuous Maps of Paths]\label{cor:path-cont}
    The space $\cpaths$ is path-connected.
\end{cor}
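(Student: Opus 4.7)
The plan is to carry over the proof of \thmref{graph-cont} almost verbatim, specialized to $G = [0,1]$, while taking care that $\cpaths$ is the quotient of $\elems{\cpaths}$ by \emph{orientation-preserving} reparametrizations only, whereas $\cgraphs([0,1])$ quotients by \emph{all} self-homeomorphisms of $[0,1]$. Fortunately, the identity $\mathrm{id}_{[0,1]}$ is an orientation-preserving homeomorphism fixing $0$, so it remains an admissible witness in the definition of $d_{FP}$, and this is all that the continuity argument requires.

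Given equivalence classes $[\apath_0], [\apath_1] \in \cpaths$, I would choose representatives $\apath_0, \apath_1 \in \elems{\cpaths}$ and define the linear interpolation $\apath_t := (1-t)\apath_0 + t\apath_1$ for each $t \in [0,1]$; each $\apath_t$ is a continuous rectifiable path $[0,1] \to \R^n$. Set $\Gamma \colon [0,1] \to \cpaths$ by $\Gamma(t) := [\apath_t]$. To verify continuity of $\Gamma$, I would plug the identity reparametrization into the $d_{FP}$ infimum to obtain
\[
d_{FP}(\apath_s, \apath_t) \leq \max_{u \in [0,1]} \| \apath_s(u) - \apath_t(u) \|_2 = |s-t| \cdot \max_{u \in [0,1]} \| \apath_1(u) - \apath_0(u) \|_2,
\]
showing that $\Gamma$ is Lipschitz, and hence continuous. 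This produces a continuous path in $\cpaths$ joining $[\apath_0]$ to $[\apath_1]$, which proves path-connectedness.

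There is no serious obstacle here; the only point worth confirming is that the identity map is a legitimate reparametrization in $d_{FP}$, which it clearly is. Once that is noted, the Lipschitz estimate above plays the same role that \lemref{gamma-cont} plays for graphs, and the rest of the argument transfers directly from \thmref{graph-cont}, as the author indicates. A minor bookkeeping point is that $\Gamma$ is well-defined on equivalence classes at the endpoints $t = 0, 1$ in the sense that $\Gamma(0) = [\apath_0]$ and $\Gamma(1) = [\apath_1]$ regardless of the chosen representatives, which is immediate from the definition of the quotient.
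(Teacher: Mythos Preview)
Your proposal is correct and follows essentially the same approach as the paper: the paper simply remarks that setting $G=[0,1]$ gives an identical proof to \thmref{graph-cont}, and you have carried this out explicitly, using the linear interpolation $\apath_t=(1-t)\apath_0+t\apath_1$ and verifying continuity via the same Lipschitz estimate that underlies \lemref{gamma-cont}. Your added remark distinguishing the orientation-preserving quotient in $\cpaths$ from the full-homeomorphism quotient in $\cgraphs([0,1])$ is a nice clarification that the paper glosses over, but it does not change the argument.
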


We now demonstrate the stricter property of the path-connectivity of open
distance balls:

\begin{lemma}[Metric Balls in $(\cgraphs, d_{FP})$]\label{lem:cont-graph-balls}
    Metric balls with finite radius in~$(\cgraphs, d_{FP})$ are path-connected.
\end{lemma}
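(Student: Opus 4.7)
The plan is to reduce to a single homeomorphism class and then join any two points of the ball to its center by linear interpolations that contract their $L^\infty$ distance to a common witness linearly in the parameter. Since $d_{FG} = \infty$ between distinct homeomorphism classes, any finite-radius ball $B(\eqgraph{c}, r) \subset \cgraphs$ is contained in $\cgraphs(G)$ for the graph $G$ underlying $\eqgraph{c}$. It therefore suffices to build a continuous arc in $B(\eqgraph{c}, r)$ from an arbitrary pair $\eqgraph{0}, \eqgraph{1} \in B(\eqgraph{c}, r)$, and I will do this by concatenating two arcs through $\eqgraph{c}$.

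For the construction, I first pick near-optimal representatives. For $i \in \{0,1\}$, choose $\epsilon_i \in \bigl(0, r - d_{FG}(\eqgraph{i}, \eqgraph{c})\bigr)$ and, using the definition of the infimum in \defref{graphfrech}, pick representatives $\phi_i \in \eqgraph{i}$, $\phi_c \in \eqgraph{c}$, and homeomorphisms $h_i \colon G \to G$ satisfying $\|\phi_i - \phi_c \circ h_i\|_\infty < d_{FG}(\eqgraph{i}, \eqgraph{c}) + \epsilon_i < r$. Let $\Gamma^i \in \mapclass{\eqgraph{i}}{\eqgraph{c}}$ be the linear interpolation of \defref{interp-graphs} built with these particular choices, so that $\Gamma^0_1$ and $\Gamma^1_0$ both represent $\eqgraph{c}$. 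By \lemref{gamma-cont} of \appendref{lin-interp}, each $\Gamma^i$ is continuous into $(\cgraphs(G), d_{FG})$, hence their concatenation gives a continuous path from $\eqgraph{0}$ through $\eqgraph{c}$ to $\eqgraph{1}$.

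To verify that this path stays in the ball, I use the scaling of an $L^\infty$-norm under linear interpolation. For every $t \in [0,1]$, treating $h_0$ itself as an admissible homeomorphism between $\Gamma^0_t$ and $\phi_c$,
\[
d_{FG}(\Gamma^0_t, \eqgraph{c}) \leq \|\Gamma^0_t - \phi_c \circ h_0\|_\infty = (1-t)\,\|\phi_0 - \phi_c \circ h_0\|_\infty < (1-t)\,r \leq r,
\]
so $\Gamma^0_t \in B(\eqgraph{c}, r)$; the symmetric computation handles $\Gamma^1$. The concatenated arc therefore lies entirely in the ball, establishing path-connectedness.

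The only subtle point — and essentially the only obstacle — is that the Fr\'echet infimum is not guaranteed to be attained, so one cannot simply take $h_i$ to be an exact minimizer of $\|\phi_i - \phi_c \circ h\|_\infty$. The $\epsilon_i$ slack above resolves this: we only need a single homeomorphism delivering strict inequality $\|\phi_i - \phi_c \circ h_i\|_\infty < r$, not an optimal one, and this is enough for the interpolation bound $(1-t)\|\phi_i - \phi_c \circ h_i\|_\infty < r$ to preserve open-ball membership for all $t$. Everything else is routine once the interpolating homeomorphisms are fixed.
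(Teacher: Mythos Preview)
Your proof is correct and follows essentially the same approach as the paper: reduce to a single homeomorphism class, connect each point of the ball to the center by a linear interpolation built using a near-optimal homeomorphism (the paper's \lemref{hexist}), and concatenate. Your distance estimate $d_{FG}(\Gamma^0_t,\eqgraph{c})\le (1-t)\|\phi_0-\phi_c\circ h_0\|_\infty$ is in fact a cleaner version of the bound the paper computes, and your explicit observation that a finite-radius ball already lies in a single $\cgraphs(G)$ is a small clarification the paper leaves implicit.
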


\begin{proof}
    Let $\delta \in \R$ such that $\delta > 0$.
    Let $(G,\eqgraph{0}) \in \cgraphs$.

    Consider the metric ball $\B := \ball{d_{FG}}{\eqgraph{0}}{\delta}$
    in $\cgraphs$.
    Let~$\eqgraph{1},\eqgraph{2} \in \B$. We wish to find a path from
    $\eqgraph{1}$ to $\eqgraph{2}$.
    We first find a path in~$\ball{d_{FG}}{\eqgraph{0}}{\delta}$
    from~$\eqgraph{0}$ to~$\eqgraph{2}$, as follows.
    Set $$\eps = \delta - d_{FG}(\eqgraph{0}, \eqgraph{2}).$$
    By \lemref{hexist}, we know that there exists a homeomorphism $h_*:G \to G$
    such that the following inequality holds:~$||\agraph_0 - \agraph_2 \circ
    h_*||_{\infty}<d_{FG}(\eqgraph{0}, \eqgraph{2}) + \eps/2$.

    Let $\Gamma \in \mapclass{\eqgraph{0}}{\eqgraph{2}}$.
    Then, for all $t \in (0,1)$,
    \begin{align*}
        & d_{FG}(\Gamma_t, \agraph_0) \\
        & \qquad =\inf_h || ((1-t)\agraph_0 + t(\agraph_2\circ h_*)) - \agraph_0 \circ h||_{\infty} \\
        & \qquad \leq || ((1-t)\agraph_0 + t(\agraph_2\circ h_*)) - \agraph_0 \circ h_*||_{\infty} \\
        \journal{
        & \qquad \leq ||\agraph_0 - \agraph_2 \circ h_*||_{\infty}
            + || -t\agraph_0 + t(\agraph_2\circ h_*)) ||_{\infty}\\
        }
        & \qquad < d_{FG}(\eqgraph{0}, \eqgraph{2}) + \eps/2\\
        & \qquad < \delta.
    \end{align*}
    \journal{BTF: where does $h_*$ come from?}
    Thus, $\Gamma_t \in \ball{d_{FG}}{\eqgraph{1}}{\delta}$, which
    means there exists a path from~$\agraph_0$ to~$\agraph_2$.  Similarly, we
    find a path $\Gamma'$ from~$\agraph_1$ to~$\agraph_0$.  Concatentating the two
    paths,~$\Gamma' \# \Gamma$
    we have a path in~$\ball{d_{FG}}{\eqgraph{0}}{\delta}$
    from~$\eqgraph{1}$ to~$\eqgraph{2}$.
    Hence, metric balls with finite radius in $\cgraphs$ are path-connected.
\end{proof}

Setting $G=[0,1]$, we obtain:

\begin{cor}[Metric Balls in $(\cpaths, d_{FP})$]\label{cor:cont-path-balls}
    Balls in the extended metric space $(\cpaths, d_{FP})$ are path-connected.
\end{cor}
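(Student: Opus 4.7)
The plan is to adapt the proof of \lemref{cont-graph-balls} with essentially no modification, treating a path as a graph-map pair with $G = [0,1]$. The only difference between $d_{FP}$ and the specialization $d_{FG}$ to $G = [0,1]$ is that $d_{FP}$ restricts the infimum to orientation-preserving homeomorphisms of $[0,1]$ (those with $r(0)=0$), while $d_{FG}$ allows both orientations via \obsref{path-graph}. Since every orientation-preserving homeomorphism is in particular a homeomorphism, all of the bounds used in the proof of \lemref{cont-graph-balls} continue to hold when the infimum is taken only over orientation-preserving reparameterizations.

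Concretely, given a ball $\B = \ball{d_{FP}}{\gamma_0}{\delta}$ and two elements $\gamma_1, \gamma_2 \in \B$, I would construct a path in $\B$ from $\gamma_1$ to $\gamma_0$ and from $\gamma_0$ to $\gamma_2$, then concatenate. To build the path from $\gamma_0$ to $\gamma_2$, I would set $\eps = \delta - d_{FP}(\gamma_0, \gamma_2)$, invoke the path-analogue of \lemref{hexist} (which still yields an orientation-preserving reparameterization $r_*$ of $[0,1]$) such that $\| \gamma_0 - \gamma_2 \circ r_* \|_\infty < d_{FP}(\gamma_0, \gamma_2) + \eps/2$, and take the linear interpolation $\Gamma_t = (1-t)\gamma_0 + t(\gamma_2 \circ r_*)$. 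The same chain of inequalities as in \lemref{cont-graph-balls}, with the infimum over orientation-preserving homeomorphisms replacing the infimum over all homeomorphisms, gives $d_{FP}(\Gamma_t, \gamma_0) < \delta$, so the interpolation stays inside $\B$. The continuity of $\Gamma$ follows from the path-version of \lemref{gamma-cont} used implicitly for graphs.

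The only potential obstacle is checking that the auxiliary lemmas on which the graph proof relies (\lemref{hexist} and \lemref{gamma-cont}) have path-analogues that respect orientation-preservation; these are straightforward since restricting to orientation-preserving reparameterizations is preserved under composition and the triangle inequality arguments do not require orientation-reversal. Once these are in hand, the corollary is obtained simply by re-running the proof of \lemref{cont-graph-balls} verbatim with $G = [0,1]$ and $d_{FG}$ replaced by $d_{FP}$.
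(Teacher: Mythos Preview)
Your proposal is correct and follows the paper's approach: the paper simply writes ``Setting $G=[0,1]$, we obtain'' and states the corollary, so specializing \lemref{cont-graph-balls} to $G=[0,1]$ is exactly what is intended. Your additional care about the distinction between $d_{FP}$ (orientation-preserving reparameterizations) and $d_{FG}$ (all homeomorphisms) is a point the paper glosses over, and your observation that the infimum bounds and the analogues of \lemref{hexist} and \lemref{gamma-cont} survive the restriction to orientation-preserving maps is the right way to close that gap.
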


\subsection{Immersions}\label{sec:imm}
An immersion is a map that is locally injectivite. Thus, self-intersections are allowed, but a map
pausing or backtracking is not. Next, we define these notions,
and give examples in \figref{bad-immerse}.

\begin{definition}[Pausing]\label{def:pause}
    We say that a path $\apath$ \emph{pauses} in an interval $I \subset
    [0,1]$ if~$\apath(x) = \apath(y)$ for every~$x, y \in I$.
    In this case, $[\apath] \not \in \ipaths$.
\end{definition}

Another possible violation of local injectivity is \emph{backtracking} on a path.

\begin{definition}[Backtracking]\label{def:backtrack}
    We say that a path $\apath$ is \emph{backtracking} at a point $x \in [0,1]$ if there exists $\delta > 0$ such that
    for every $\epsilon \in (0,\delta)$, either
    $\apath|_{(x - \epsilon, x)} \subset \apath_{(x, x + \epsilon)}$ or~$\apath|_{(x, x + \epsilon)} \subset \apath|_{(x - \epsilon, x)}$.
\end{definition}

To show the path-connectivity of spaces of immersions, the proof
in \thmref{graph-cont} for continuous mappings is
\emph{almost} sufficient, but these added violations must be addressed.
Thus, we introduce additional maneuvers to
avoid pauses and backtracking.
\begin{figure}
    \centering
    \begin{subfigure}[b]{0.2\textwidth}
        \includegraphics[width=\textwidth]{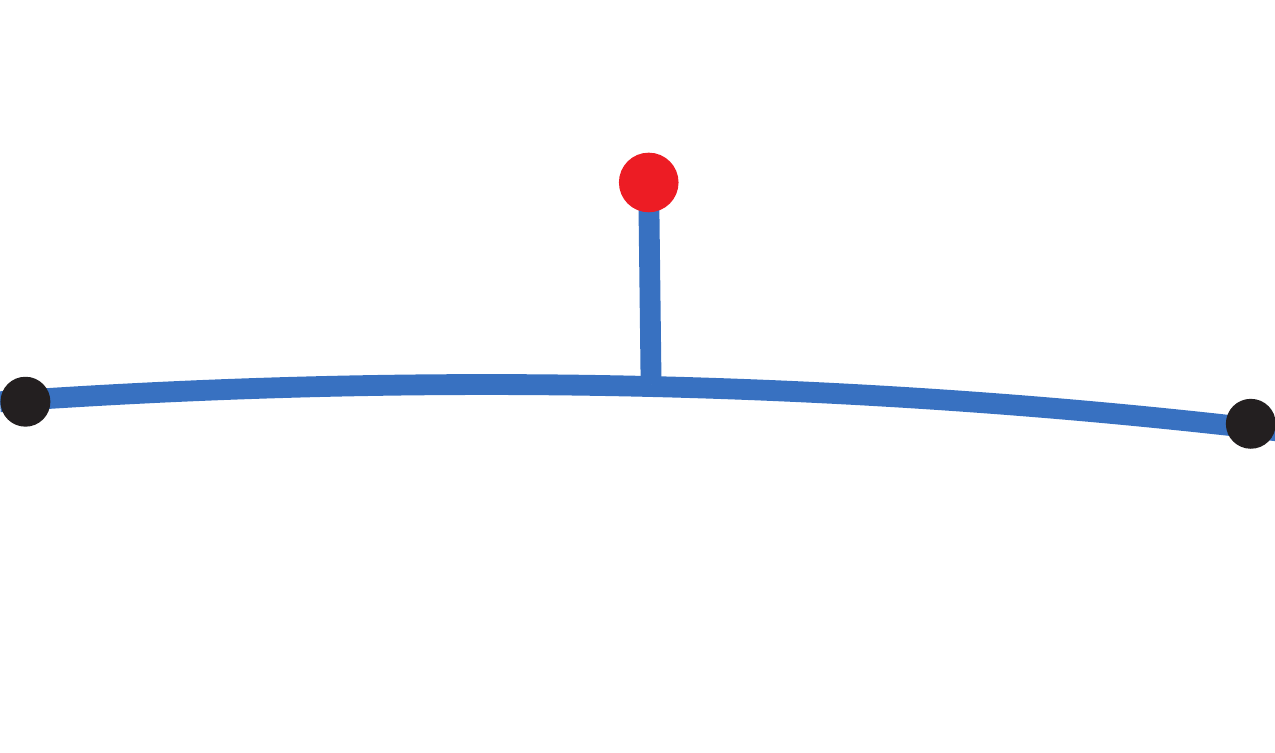}
        \caption{Forced Backtracking}
        \label{fig:backtrack}
    \end{subfigure}
    ~ %add desired spacing between images, e. g. ~, \quad, \qquad, \hfill etc.
    \begin{subfigure}[b]{0.2\textwidth}
        \includegraphics[width=\textwidth]{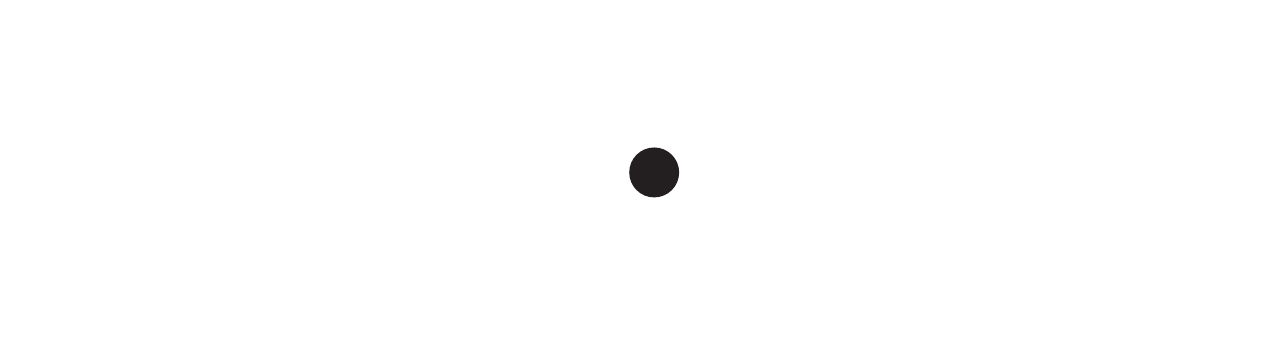}
        \caption{Constant Map}
        \label{fig:point}
    \end{subfigure}
    \caption{
	Examples of paths in $\cpaths$ but not $\ipaths$. \figref{backtrack} demonstrates
	a path with necessary backtracking
	at the red point. \figref{point} demonstrates a constant path which (vacuously) must pause.
    For a nontrivial example of a path with pauses, consider
	any parameterization of a path sending an open interval to a point.
    }\label{fig:bad-immerse}
\end{figure}

\begin{lemma}[Rerouting Pauses]\label{lem:thepause}
    Let~$\apath_{0}, \apath_{1} \in \elems\ipaths$, and let
    $\Gamma:[0,1] \to \elems\cpaths$ be a path in $\elems\cpaths$
    from~$\apath_{0} $ to~$\apath_{1}$.
    % ASSUMPTIONS:
    Suppose there exists an interval~$[t_1,t_2]$ such that for all~$t \in
    [0,1]\setminus (t_1,t_2)$, $\Gamma_t$ is an immersion.  But, for all $t
    \in~(0,1)$, $\Gamma_t$ has a single pause.
    % CLAIM:
    Then, there exists a different path~$\Gamma^* \colon [0,1] \to
    \ipaths$ that avoids the~pause.
\end{lemma}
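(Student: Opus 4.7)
My plan is to perform local surgery on each pausing path $\Gamma_t$, replacing the offending constant segment with a small circular detour whose radius vanishes continuously as $t$ exits the interval $(t_1, t_2)$. For each $t \in (t_1, t_2)$, let the (maximal) pause interval be $I_t = [a_t, b_t] \subset [0,1]$, and let $p_t := \Gamma_t(I_t) \in \R^n$ denote the pause point. Choose a continuous magnitude function $\epsilon \colon [0,1] \to [0,\infty)$ vanishing on $[0, t_1] \cup [t_2, 1]$ and positive on $(t_1, t_2)$, for instance $\epsilon(t) := c \min(t - t_1, t_2 - t)$ for a small constant $c > 0$. Fix two orthonormal vectors $u_1, u_2 \in \R^n$ (assuming $n \geq 2$, as is standard in this context since in $\R^1$ the space of immersions is already disconnected) and define
\[
    \Gamma^*_t(s) := \begin{cases} \Gamma_t(s), & s \in [0,1] \setminus (a_t, b_t), \\ p_t + \epsilon(t)\bigl[(\cos 2\pi \sigma_t(s) - 1) u_1 + \sin(2\pi \sigma_t(s)) u_2\bigr], & s \in [a_t, b_t], \end{cases}
\]
where $\sigma_t(s) := (s - a_t)/(b_t - a_t)$. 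The detour traces a circle of radius $\epsilon(t)$ based at $p_t$, so $\Gamma^*_t$ remains continuous on $[0,1]$ and agrees with $\Gamma_t$ at $s = a_t, b_t$.

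Next, I would verify three items: (i) each $\Gamma^*_t$ is an immersion; (ii) $t \mapsto \Gamma^*_t$ is continuous in the Fr\'echet metric; and (iii) $\Gamma^*_0 = \apath_0$ and $\Gamma^*_1 = \apath_1$. Item (iii) is immediate from $\epsilon(0) = \epsilon(1) = 0$. For (i), local injectivity on the interior of $[a_t, b_t]$ is clear since a circle arc is an immersion, and $\Gamma_t$ is already an immersion outside $[a_t, b_t]$ by hypothesis; at the splice points $s = a_t, b_t$, I would shrink $c$ to ensure the loop lies in a Euclidean ball around $p_t$ small enough that it meets the incoming and outgoing arcs of $\Gamma_t$ transversely at $p_t$ alone. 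For (ii), the bound $\|\Gamma^*_t - \Gamma_t\|_\infty \leq 2\epsilon(t)$, obtained via the identity reparameterization, yields Fr\'echet continuity across $t = t_1, t_2$; interior continuity reduces to the continuous variation of the pause data $(a_t, b_t, p_t)$ in $t$.

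The principal obstacle is this last continuity claim. Fr\'echet continuity of $\Gamma$ does not a priori guarantee continuity of the pause interval, but I expect the \emph{single pause} hypothesis to suffice: if $t_n \to t$ and the pause intervals $I_{t_n}$ fail to converge to $I_t$ in Hausdorff distance on $[0,1]$, then by passing to a subsequence and using Fr\'echet convergence (reparameterizing to uniform convergence), one can extract in the limit a second pausing segment of $\Gamma_t$ disjoint from $I_t$, contradicting uniqueness of the pause. A secondary concern is that $I_t$ may degenerate to a single point as $t \to t_1^+$ or $t \to t_2^-$; however, since $\epsilon(t) \to 0$ simultaneously, the loop collapses onto the point and $\Gamma^*_t$ recovers the already-immersed boundary value $\Gamma_{t_i}$ seamlessly, completing the construction.
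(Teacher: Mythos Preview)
Your construction is correct and takes a genuinely different route from the paper. Rather than adding new geometry, the paper simply \emph{reparameterizes} each pausing curve: with $\eps_t := \min(t_2-t,\,t-t_1)$ it borrows the arcs $\Gamma_t|_{[a_t-\eps_t,\,a_t]}$ and $\Gamma_t|_{[b_t,\,b_t+\eps_t]}$ from just outside the pause and re-times them across $[a_t,b_t]$, so that $\Gamma_t^*$ traces exactly the same image as $\Gamma_t$, only without stopping. The payoff is that $[\Gamma_t^*]=[\Gamma_t]$ in $\cpaths$, hence rerouting changes the Fr\'echet distance by zero; the paper relies on this later when proving path-connectedness of metric balls in $\ipaths$. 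Your loop-insertion instead perturbs the image by up to $2\epsilon(t)$, which still proves the lemma as stated and could be pushed through the subsequent ball arguments with a little extra bookkeeping, but is less clean there.

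On the other hand, your proposal is more scrupulous than the paper in two respects: you explicitly flag the need for the pause data $(a_t,b_t,p_t)$ to vary continuously in $t$, and you at least acknowledge the local-injectivity issue at the splice points $a_t,b_t$. The paper's version has the analogous splice concern---at the midpoint of $(a_t,b_t)$ where the two borrowed arcs are glued, the incoming and outgoing pieces of $\Gamma_t$ meet at $p_t$ and must not backtrack---and relies implicitly on the ``single pause'' hypothesis to exclude it, without comment.
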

\begin{proof}
    Let $t \in [t_1,t_2]$. Let the pause in $\Gamma_t$ be over the
    interval~$(a_t,b_t)\subset[0,1]$.
    Let $\eps_t:=\min(t_2-t,t-t_1)$.
    \journal{BTF: we also need to account for `going off the edge'}
    We stretch the paused interval $(a_t,b_t)$ in~$\Gamma_t$ by defining a
    map~$\Gamma^*_t:[0,1]\to\elems\ipaths$~as follows:
    \begin{itemize}
        \item $\Gamma_t^*(-\infty,a_t]$ is an oriented reparameterization
            of~$\Gamma_t(-\infty,a_t-\eps_t]$.
        \item $\Gamma_t^*(a_t,b_t)$ is an oriented reparameterization
            of~$\Gamma_t(a_t-\eps_t,a_t] \# \Gamma_t[b_t,b_t+\eps)$
        \item $\Gamma_t^*[b_t,\infty)$ is an oriented reparameterization
            of~$\Gamma_t[b_t+\eps_t,\infty)]$.
    \end{itemize}
    \journal{more precisly ...
    \begin{equation}\label{eq:reparam-1}
        \Gamma^*_t(x) :=
        \begin{cases}
            \Gamma_t\left(\frac{a_tx}{a_t-\eps_t} \right) & \text{if } x \in
            [0,a_t-\eps_t]\\
            \todo{\Gamma_t(2(x-b)\cdot(1-b) + b)} & \text{if } x \in (a_t-\eps_t,\frac{a_t+b_2}{2})\\
            \todo{\Gamma_t(2(x-b)\cdot(1-b) + b)} & \text{if } x \in (\frac{a_t+b_2}{2},b_t+\eps_t)\\
            \todo{\Gamma_t(2(x-b)\cdot(1-b) + b)} & \text{if } x \in [b_t+\eps_t,1]
        \end{cases}
    \end{equation}
    }
    By construction, $\Gamma^*_t$ has removed the pause between $a_t$
    and~$b_t$; hence,~$\Gamma_s \in \elems\ipaths$.
    Putting these maps together, we obtain a
    map~$\Gamma^* \colon [0,1] \to \elems\cpaths$, where
    \begin{equation}\label{eq:reparam-2}
        \Gamma(t) :=
        \begin{cases}
            \Gamma_t &\text{if } t \not\in (t_1,t_2)\\
            \Gamma^*_t &\text{if } t \in (t_1,t_2).
        \end{cases}
    \end{equation}
    Moreover, $\Gamma$ is continuous in $\ipaths$.
\end{proof}

Direct linear interpolation can also yield degeneracies by creating a singleton in specific circumstances, or by creating a backtracking point.
Each are addressed in the following theorem, and a path is constructed.

\begin{theorem}[Path Immersions]\label{thm:path-imm}
    The extended metric space~$(\ipaths, d_{FP})$ of paths immersed in $\R^n$ is
    path-connected iff $n>1$.
\end{theorem}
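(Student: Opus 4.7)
The plan is to prove the biconditional by separate arguments for the two directions. The forward direction is a topological obstruction; the converse extends the interpolation strategy of \thmref{graph-cont} by patching up bad parameter values.

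For the necessary direction ($n=1$), I would exploit the fact that every locally injective continuous map $[0,1] \to \R$ is strictly monotone (combine local injectivity with the intermediate value theorem and connectedness of $[0,1]$), so every representative of a class in $\ipaths$ has $\apath(0) \neq \apath(1)$. Since orientation-preserving homeomorphisms of $[0,1]$ fix both $0$ and $1$, the evaluation maps $\mathrm{ev}_0, \mathrm{ev}_1 \colon \ipaths \to \R$ are well-defined on equivalence classes and $1$-Lipschitz with respect to $d_{FP}$. Taking $\apath_0(t) = t$ and $\apath_1(t) = 1-t$, any continuous $\Gamma \colon [0,1] \to \ipaths$ between these two classes would produce a continuous real-valued function $s \mapsto \mathrm{ev}_0(\Gamma_s) - \mathrm{ev}_1(\Gamma_s)$ changing sign from $-1$ to $+1$, so the intermediate value theorem forces some $\Gamma_{s^*}$ to have equal endpoints, contradicting strict monotonicity.

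For the sufficient direction ($n \geq 2$), I would repair the linear interpolation of \thmref{graph-cont} with $G = [0,1]$. Given representatives $\apath_0, \apath_1 \in \elems\ipaths$ and a near-optimal homeomorphism $h$, the family $\Gamma_s := (1-s)\apath_0 + s(\apath_1 \circ h)$ is continuous in $\cpaths$ by \lemref{gamma-cont}, and its endpoints $\Gamma_0, \Gamma_1$ already lie in $\ipaths$. Intermediate $\Gamma_s$ can fail to be immersions only by a pause (\defref{pause}) or by backtracking (\defref{backtrack}). On each maximal $s$-interval exhibiting pauses I would apply \lemref{thepause}; on each maximal $s$-interval exhibiting backtracking, I would use $n \geq 2$ to add a smoothly $(s,t)$-tapered perturbation $\eta(s,t)\,v$, where $v$ is a unit vector transverse to the backtracking direction, converting the overlap into a small genuine loop. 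Tapering $\eta$ to vanish at the $s$-interval boundaries preserves $d_{FP}$-continuity of the spliced family and leaves every modified $\Gamma^*_s$ in $\ipaths$.

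The main obstacle I anticipate is bookkeeping when pauses and backtracks interlace or accumulate in the $s$-parameter, so the failure set is not a finite disjoint union of intervals of a single type. One remedy is to first replace $\apath_0$ and $\apath_1$ by piecewise-linear immersions in generic position (distinct edge directions, no anti-parallel pairs), for which the linear interpolation is an immersion off a finite set of $s$-values and the two surgeries become finite and independent. A related subtlety is verifying $d_{FP}$-continuity at each repair boundary, which I expect reduces to arranging the amplitudes of \lemref{thepause}'s reparameterization and of $\eta$ to vanish at those~boundaries.
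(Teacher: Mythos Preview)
Your proposal is correct and follows essentially the same strategy as the paper: for $n>1$ you repair the linear interpolation of \thmref{graph-cont} by invoking \lemref{thepause} on pauses and by a small transverse perturbation at backtracking points, which is exactly the paper's ``Q-tip'' maneuver (\lemref{q-tip}). Two differences are worth noting. First, the paper treats the full collapse to a constant map as a separate case handled by rotation (\lemref{ole-spinny}), whereas you propose absorbing it into a generic-position reduction (PL representatives with no anti-parallel edge pairs); both routes are legitimate, and your reduction also cleans up the accumulation issue you correctly flag, which the paper leaves informal. Second, your $n=1$ argument via the endpoint-difference function $s \mapsto \mathrm{ev}_0(\Gamma_s) - \mathrm{ev}_1(\Gamma_s)$ and the intermediate value theorem is actually more complete than the paper's, which only observes that the obvious interpolation degenerates without ruling out other continuous paths in $\ipaths$.
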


\begin{proof}
    If $n=1$, it is easy to see that $\ipaths$ is not path-connected by
    examining intervals with reversed orientation which trivially degenerate to
    a point when constructing a path, violating local injectivity.

    Now, consider $n>1$.  Let $\eqpath{0},\eqpath{1} \in \ipaths$.
    Using \defref{interp-graphs}, let~$\Gamma: [0,1] \to \cpaths$ be the linear interpolation from
    $\apath_0$ to $\apath_1$.
    This interpolation is in~$\cpaths$, not~$\ipaths$, so we explain how to
    edit~$\Gamma$ so that it stays in $\ipaths$.
    If~$\Gamma(t) \in \elems\ipaths$ for each $t \in [0,1]$, we
    are done.  Otherwise, let~$T \subset I$ be the set of times that introduce a
    non-immersion (i.e.,~$t \in T$ iff $\Gamma(t) \not\in \elems\ipaths$,
    but~$\Gamma(t-\epsilon) \in \elems\ipaths$ for all $\epsilon$ small enough).
    There are two things that might have happened at $t$:
    either an interval collapsed to a point (a pause) or backtracking was introduced
    in $\Gamma(t)$.

    \begin{enumerate}

        \item Suppose there exists $t \in T$ where an interval pauses as in \defref{pause}
            and \figref{point}. Note that a pausing event occurs either if an
            interval of ~$\Gamma_t$ becomes degenerate, or $\Gamma_t$
            collapses to a point.

            If pausing occurs only on an open interval $(a,b) \subset [0,1]$ of $\apath_t \in \Gamma_t$, it can be avoided using \lemref{thepause}.
            If pausing occurs on a closed interval $[a,b] \subset [0,1]$,
            we convert it to the open set $(a-\eps, b+\eps)$
            \journal{not sure if this fits assumptions of \lemref{thepause}}
            for small~$\eps$, and use \lemref{thepause}. If either $a=0$
            or~$b=1$, we simply redefine $\apath_t$ to
            start at $b$ or to end at $a$, respectively, using \lemref{closedpause}.
            The pausing event is guaranteed to conclude at some $t + \delta$ for $\delta \geq 0$
            since~$\eqpath{1} \in \ipaths$, and $\Gamma$ must attain $\apath_1 \in \eqpath{1}$.

            If a pausing event stems from a full collapse to a singleton (i.e. interpolation occurs between
            two colinear segments with reverse orientation, and consequently degenerate to a point),
            the collapse can be circumvented by rotating the path defining~$\Gamma_t$, which is done in
            \lemref{ole-spinny}.

        \item Alternatively, suppose there exists $t \in T$ which corresponds to backtracking at
            a point in a path~$\Gamma_t$ according to \defref{backtrack} and
            \figref{backtrack}. Here,~$\Gamma_t$
            can remain in~$\elems\ipaths$ by inflating a ball of radius~$\epsilon$ for sufficiently small $\epsilon > 0$
            about the backtracking point before it is created.
            This is included in \lemref{q-tip}, and shown in \figref{q-tip}.

    \end{enumerate}

    For all $t \in T$, the described moves can be used to subvert lapses in local injectivity
    along $\Gamma$. Hence, we construct a path $\Gamma$
    by interpolating from~$\apath_0$ to~$\apath_1$, and applying the required move at each
    $t \in T$ to handle pauses or backtracking. By the arbitrariness of $\Gamma$, we have given
    a class of continuous paths from any element~$\apath_0 \in \eqpath{0}$ to any
    $\apath_1 \in \eqpath{1}$.
\end{proof}

\begin{theorem}[Metric Balls in $(\ipaths, d_{FP})$]\label{thm:imm-path-balls}
    If $n>1$, then balls in the extended metric space $(\ipaths, d_{FP})$ are path-connected.
\end{theorem}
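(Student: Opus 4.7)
The plan is to adapt the argument of \lemref{cont-graph-balls} to the setting of immersed paths, combining the linear-interpolation-and-triangle-inequality bound used there with the local surgery moves introduced in the proof of \thmref{path-imm}. Fix $\delta > 0$ and $\eqpath{0} \in \ipaths$, and set $\B := \ball{d_{FP}}{\eqpath{0}}{\delta}$. Given $\eqpath{1},\eqpath{2}\in\B$, it suffices (by concatenation through the center, as in \lemref{cont-graph-balls}) to produce, for any $\eqpath\in\B$, a continuous path in $\B\cap\ipaths$ from $\eqpath{0}$ to $\eqpath$.

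To do so, I would begin with the linear interpolation $\Gamma_t = (1-t)\apath_0 + t(\apath\circ h_*)$, where $h_*$ is a near-optimal reparameterization supplied by \lemref{hexist} so that $\|\apath_0 - \apath\circ h_*\|_\infty < d_{FP}(\eqpath{0},\eqpath) + \eps/2$, with $\eps := \delta - d_{FP}(\eqpath{0},\eqpath)$.  The same subadditivity calculation as in \lemref{cont-graph-balls} then gives $d_{FP}(\Gamma_t,\apath_0) \leq t\,\|\apath_0 - \apath\circ h_*\|_\infty < \delta$, so the raw linear interpolation already stays inside $\B$ as a subset of $\cpaths$.

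The main obstacle is that $\Gamma_t$ may leave $\ipaths$ on the set $T\subset[0,1]$ of times where pauses, backtracking, or collinear collapses occur. At each such time I would invoke the surgery moves already used in \thmref{path-imm}: \lemref{thepause} to reroute paused subintervals, \lemref{q-tip} to inflate a small neighborhood around a backtracking point, and \lemref{ole-spinny} to rotate out of a degenerate collinear collapse (the latter two requiring $n>1$, which is why the hypothesis is needed here as well). The crucial property is that all three surgeries are local and can be performed at arbitrarily small scale $\eta>0$, producing a replacement $\Gamma_t^*$ with $d_{FP}(\Gamma_t^*,\Gamma_t)<\eta$.

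The hard part will therefore be quantitative control: I need to choose the surgery scale $\eta$ small enough (for instance, $\eta < \eps/2$) and uniformly enough over $t\in T$ that each $\Gamma_t^*$ remains in $\B$ by a second triangle inequality, and that the modified family $\Gamma^*$ stays continuous in $t$ across the boundary of $T$. The latter will likely require reading off the quantitative dependence of the surgery lemmas on a continuously varying scale parameter, but I expect this is implicit in their proofs since each surgery is itself constructed from a continuous local reparameterization. Once this is in place, the two arcs from $\eqpath{1}$ and $\eqpath{2}$ to $\eqpath{0}$ can be concatenated to obtain the desired path in $\B\cap\ipaths$, and the result follows.
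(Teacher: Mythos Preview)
Your proposal is correct and follows essentially the same approach as the paper: combine the linear-interpolation bound from \lemref{cont-graph-balls} with the surgery moves of \thmref{path-imm} (\lemref{thepause}, \lemref{ole-spinny}, \lemref{q-tip}) and argue that the resulting path stays inside the ball. The only cosmetic difference is that you bound all surgeries uniformly by a scale parameter $\eta<\eps/2$ and a triangle inequality, whereas the paper argues move-by-move that \lemref{thepause} and \lemref{ole-spinny} do not increase the Fr\'echet distance and that any potential increase from \lemref{q-tip} can be absorbed by a small perturbation of the backtracking point.
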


\begin{proof}
	Let $\eqpath{0}, \eqpath{1} \in \ipaths$, and let $\delta>0$.
        Let $\Gamma \in \ipaths$ be the map $\Gamma$ in the proof of
        \thmref{path-imm}.
        \journal{BTF: can we simplify this so we're not chasing references?}
	By \lemref{cont-graph-balls}, linear interpolation does not increase
        the Fr\'echet distance. By design, avoiding singleton degeneracies by way of \lemref{ole-spinny}
	also does not increase the Fr\'echet distance. Moreover, by
        construction, the map~$\Gamma^*$ of
	\lemref{thepause} preserves the Fr\'echet distance.
        \journal{the previous sentence needs a lemma}
    The maneuver in \lemref{q-tip} could potentially increase~$d_{FP}(\Gamma_t, \apath_1)$
	at some time $t \in [0,1]$,
	but in this case any critical backtracking points
    can be perturbed slightly in order to no longer define the $d_{FP}(\Gamma_t, \apath_1)$. Hence,
	these moves need not result in $d_{FP}(\Gamma_t, \apath_1) > \delta$, meaning that
    $\Gamma_t \in \ball{d_{FP}}{\agraph_1}{\delta}$,
	and balls in $\ipaths$ are path-connected.
\end{proof}

We use the same maneuvers from \thmref{path-imm} in the context for graphs under $d_{FG}$.

\begin{theorem}[Graph Immersions]\label{thm:graph-imm}
    For every graph~$G$, the extended metric space $(\igraphs(G), d_{FG})$ is path-connected.
    Connected components of the extended metric space $(\igraphs,d_{FG})$
    are in one-to-one correspondence with the homeomorphism classes of~graphs.
\end{theorem}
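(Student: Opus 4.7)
The plan is to mirror the proof of \thmref{path-imm} by combining linear interpolation with the same pause/backtracking maneuvers, now applied edge-by-edge on the graph. Given $[\phi_0], [\phi_1] \in \igraphs(G)$, I would choose representatives and a homeomorphism $h_* \colon G \to G$ from \lemref{hexist}, form the linear interpolation $\Gamma \in \mapclass{\eqgraph{0}}{\eqgraph{1}}$ from \defref{interp-graphs}, and note that by \thmref{graph-cont}, $\Gamma$ is continuous in $\cgraphs(G)$. The remaining task is to modify $\Gamma$ so that every $\Gamma_t$ lies in $\igraphs(G)$ while remaining continuous.

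Since a graph is a CW complex whose open $1$-cells are paths in $\R^n$, every failure of local injectivity of $\Gamma_t$ is localized either in the interior of a single edge or at a vertex. Letting $T \subset [0,1]$ denote the set of times at which $\Gamma_t \notin \igraphs(G)$ but $\Gamma_{t-\eps} \in \igraphs(G)$ for all sufficiently small $\eps > 0$, for each $t \in T$ I would identify the offending edge and apply the path-level maneuver: \lemref{thepause} for pauses on an open subinterval of an edge, \lemref{closedpause} for pauses abutting a vertex, \lemref{ole-spinny} for a colinear collapse to a singleton (where $n > 1$ is implicitly needed, exactly as in \thmref{path-imm}), or \lemref{q-tip} for backtracking. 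Each maneuver is extended to the rest of $G$ as the identity. The resulting modified map $\Gamma^*$ is then a continuous path from $\eqgraph{0}$ to $\eqgraph{1}$ in $\igraphs(G)$.

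For the statement about connected components of $(\igraphs, d_{FG})$, I would reuse the argument of the second half of \thmref{graph-cont}: if $G \not\cong H$ then $d_{FG}((G,\phi),(H,\psi)) = \infty$, placing them in distinct components of the extended metric space, while if $G \cong H$ one uses any homeomorphism to pull back $\psi$ to a map $G \to \R^n$ and then appeals to path-connectedness of $\igraphs(G)$ proved above.

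The hard part will be verifying that the path maneuvers remain compatible with the graph structure at vertices. Away from vertices, the \lemref{thepause}, \lemref{ole-spinny}, and \lemref{q-tip} moves apply verbatim because a sufficiently small deformation neighborhood is contained in a single edge. But when a pause or a backtracking event occurs at or through a vertex---for instance, two incident edges collapsing simultaneously so the vertex itself becomes part of a pause set, or a backtracking that spans portions of two edges meeting at the vertex---one must check that the reparameterization can be done while keeping the vertex location and the edge incidence fixed, so that the output is still a well-defined continuous map out of $G$. I expect this to reduce to a careful local argument on a vertex star, using that immersiveness at a vertex of degree $k$ means the map on each individual incident edge is locally injective near the vertex, so that the path-maneuvers can be restricted to each incident edge independently and then glued.
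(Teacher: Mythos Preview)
Your proposal is correct and follows essentially the same approach as the paper: construct $\Gamma$ as in \thmref{path-imm} via linear interpolation, then repair failures of local injectivity edge-by-edge using \lemref{thepause}, \lemref{ole-spinny}, and \lemref{q-tip}, and finally invoke the $d_{FG}=\infty$ argument for non-homeomorphic graphs. In fact you are more careful than the paper's own proof, which does not discuss the vertex-compatibility issue you flag in your last paragraph; the paper simply asserts that the path maneuvers apply on each edge without comment on gluing at vertices.
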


\begin{proof}
    We construct $\Gamma$ identically to \thmref{path-imm},
    but interpolation
    occurs among each edge of $G$ in $\igraphs(G)$
    rather than between individual segments. As in \thmref{path-imm},
    local injectivity can only be violated by pauses and backtracking on edges,
    which are handled using \lemref{thepause}, \lemref{ole-spinny}, and \lemref{q-tip} on each edge.
    If $(G,\eqgraph{0}), (H,\eqgraph{1}) \in \igraphs$ for $G, H$ which are not homeomorphic,
    then $d_{FG}((G,\eqgraph{0}), (H,\eqgraph{1})) = \infty$.
\end{proof}

Similarly, we can adopt \thmref{imm-path-balls}
for each edge in a graph to show path-connectivity of balls in $\igraphs$.

\begin{theorem}[Metric Balls in $(\igraphs(G), d_{FG})$]\label{thm:imm-graph-balls}
    For every graph~$G$, the balls in the extended metric space~$(\igraphs(G), d_{FG})$ are path-connected.
\end{theorem}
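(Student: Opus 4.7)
The plan is to mirror the argument of \thmref{imm-path-balls}, but applied edge-wise across $G$ exactly as in \thmref{graph-imm}. Given $\eqgraph{1}, \eqgraph{2}$ lying inside $\B := \ball{d_{FG}}{\eqgraph{0}}{\delta}$ within $\igraphs(G)$, I would use the hub construction of \lemref{cont-graph-balls}: it suffices to exhibit a path from $\eqgraph{0}$ to each of $\eqgraph{1}$ and $\eqgraph{2}$ inside $\B \cap \igraphs(G)$ and then concatenate.

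The first step is to invoke \lemref{cont-graph-balls} applied to $\eqgraph{0}$ and $\eqgraph{2}$: choosing a near-optimal homeomorphism $h_*$ with~$\|\agraph_0 - \agraph_2 \circ h_*\|_\infty < d_{FG}(\eqgraph{0}, \eqgraph{2}) + \eps/2$ for $\eps = \delta - d_{FG}(\eqgraph{0}, \eqgraph{2})$, the associated linear interpolation $\Gamma \in \mapclass{\eqgraph{0}}{\eqgraph{2}}$ gives a continuous curve in~$\cgraphs(G)$ with $d_{FG}(\Gamma_t, \eqgraph{0}) < \delta$ for every $t$. In general $\Gamma_t$ can fail local injectivity on some edges of $G$, so I would let $T \subset [0,1]$ be the set of times at which a pause, singleton collapse, or backtracking event appears on at least one edge, and at each such $t$ and affected edge, apply the corresponding maneuver from \lemref{thepause}, \lemref{ole-spinny}, or \lemref{q-tip} to obtain a modified family $\Gamma^*$ lying in $\igraphs(G)$.

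The critical step is to show that $\Gamma^*$ remains inside $\B$. The rerouting of a pause (\lemref{thepause}) and the rotational avoidance of a singleton collapse (\lemref{ole-spinny}) preserve the Fr\'echet distance by construction, so these moves keep $\Gamma^*_t$ inside~$\B$. The inflation maneuver of \lemref{q-tip} may locally increase the Fr\'echet distance, but, as in \thmref{imm-path-balls}, the critical backtracking point can be perturbed by an arbitrarily small amount so that it does not realize $d_{FG}(\Gamma^*_t, \eqgraph{0})$; equivalently, the inflation radius can be chosen smaller than the available slack $\delta - \sup_t d_{FG}(\Gamma_t, \eqgraph{0}) > 0$.

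The hard part will be the bookkeeping across edges: several edges of $G$ may simultaneously require maneuvers, and one must ensure that the cumulative perturbation still fits inside $\B$. Because each maneuver is supported in an arbitrarily small neighborhood of its problematic point and the edge maps interact only at vertices, I expect one can choose the edge-wise parameters small enough (and disjoint from vertex neighborhoods) so that the combined $\Gamma^*$ stays inside $\B$ for every $t$, yielding path-connectivity of metric balls in~$(\igraphs(G), d_{FG})$.
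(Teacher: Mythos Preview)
Your proposal is correct and follows essentially the same approach as the paper: build the path via edge-wise linear interpolation as in \thmref{graph-imm}, then invoke the same three maneuvers (\lemref{thepause}, \lemref{ole-spinny}, \lemref{q-tip}) together with the argument of \thmref{imm-path-balls} to keep every $\Gamma_t$ inside the ball. The only cosmetic difference is that you route through the ball center $\eqgraph{0}$ (the hub construction of \lemref{cont-graph-balls}) and then concatenate, whereas the paper connects the two points $\eqgraph{0},\eqgraph{1}\in\B$ directly; your extra bookkeeping about simultaneous edge perturbations is more careful than the paper, which simply asserts the edge-wise transfer from the path case.
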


\begin{proof}
    Let $\eqgraph{} \in \igraphs$, and let $\delta > 0$.
    Let~$\B$ be the intersection~$\ball{d_{FG}}{\agraph}{\delta}\cap \igraphs(G)$.
    Let $\eqgraph{0}, \eqgraph{1} \in \B$.
    Construct the path~$\Gamma:[0,1]\to \B$ from $\eqgraph{0}$ to $\eqgraph{1}$
    in the same way as \thmref{graph-imm}.
    \journal{BTF: can we simplify this so we're not chasing references?}
    Just as in \thmref{imm-path-balls},
    linear interpolation
    and the moves in \lemref{ole-spinny}, \lemref{q-tip}, and \lemref{thepause}
    mandate that~$\Gamma(t) \in \B$ for every $t \in (0,1)$ identically to the
    path Fr\'echet distance.
\end{proof}

\subsection{Embeddings}\label{sec:emb}
Lastly, we examine the path-connectedness property of the analogous spaces of embeddings.

\begin{theorem}[Path Embeddings]\label{thm:path-emb}
    The extended metric space $(\epaths, d_{FP})$ is path-connected in $\R^n$
    if and only if~$n > 1$.
\end{theorem}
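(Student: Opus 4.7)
The plan is to prove the equivalence direction-by-direction.

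For the ``only if'' direction, suppose $n = 1$. Every embedding $\alpha \colon [0,1] \to \R$ is strictly monotone by the intermediate value theorem combined with injectivity, so $\alpha(1) - \alpha(0)$ is nonzero. Because the reparameterizations used in $d_{FP}$ fix $0$ and therefore $1$, Fr\'echet-equivalent paths share both endpoints; hence the assignment $[\alpha] \mapsto \alpha(1) - \alpha(0)$ is a well-defined, Fr\'echet-continuous map $\epaths \to \R \setminus \{0\}$. Its sign distinguishes strictly increasing from strictly decreasing embeddings and obstructs any continuous path between them in $\epaths$.

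For the ``if'' direction, assume $n \geq 2$ and let $[\alpha_0],[\alpha_1] \in \epaths$. I would route through canonical representatives: for $i \in \{0,1\}$, let $L_i$ denote the straight segment from $\alpha_i(0)$ to $\alpha_i(1)$. Since the ordered configuration space $\R^n \times \R^n \setminus \Delta$ is path-connected for $n \geq 2$, I can continuously move the two endpoints from $(\alpha_0(0),\alpha_0(1))$ to $(\alpha_1(0),\alpha_1(1))$ through distinct pairs, taking the straight segment at each moment; this produces a continuous family in $\epaths$ connecting $L_0$ to $L_1$. The main work is then to construct paths in $\epaths$ from $[\alpha_i]$ to $L_i$.

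For that core step I would begin with the linear interpolation $\Gamma_s := (1-s)\alpha_i + sL_i$, as in \thmref{path-imm}, and repair each time $s$ at which $\Gamma_s$ fails to be an embedding. Local failures---pauses, backtracking, and singleton collapses---can be resolved exactly as for immersions via \lemref{thepause}, \lemref{ole-spinny}, and \lemref{q-tip}. The genuinely new difficulty is global self-intersections, where distinct $u,v \in [0,1]$ satisfy $\Gamma_s(u) = \Gamma_s(v)$ without any local non-injectivity. Because $n \geq 2$, any such crossing can be resolved by an ambient perturbation of $\R^n$ supported in a small neighborhood of the intersection point that pushes one strand off the other along a transverse direction; these perturbations can be chosen arbitrarily small in the sup norm and hence contribute negligibly to the Fr\'echet distance.

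The main obstacle is the global self-intersection repair when $n = 2$, where transverse perturbations must also remain in the plane. I would handle it by first approximating $\alpha_i$ by a PL embedding (dense in the Fr\'echet topology) and then straightening the PL arc edge-by-edge via compactly supported ambient isotopies of $\R^2$, using classical tameness of PL arcs to ensure such isotopies exist. Stitching these together with the Fr\'echet-cost estimates of \thmref{imm-path-balls} yields the desired continuous path in $\epaths$ from $[\alpha_i]$ to $L_i$, completing the proof.
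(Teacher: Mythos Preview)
Your argument is sound, but it takes a genuinely different route from the paper's. For $n=1$ you give a cleaner obstruction than the paper does (the paper merely asserts that oppositely oriented intervals cannot be joined). For $n>1$, the paper does \emph{not} try to interpolate to the straight segment on the original endpoints and repair self-intersections. Instead it continuously \emph{restricts} each $\alpha_i$ to a shrinking terminal subinterval $\alpha_i|_{[s,1]}$ until the restricted arc is nearly straight; it then rotates one short arc to be parallel to the other, linearly interpolates between the two nearly-straight arcs, and finally grows the restriction back out to recover $\alpha_1$. Because restriction of an embedding is always an embedding, and because sufficiently short nearly-straight arcs can be rotated and interpolated without creating crossings, this strategy sidesteps the global self-intersection issue entirely and needs no dimension-by-dimension repair. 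Your approach, by contrast, keeps the endpoints fixed and must confront global crossings head-on; that is why you need the extra PL-approximation and ambient-isotopy machinery in dimension~$2$, and small transverse perturbations in higher dimensions. Both strategies work, but the paper's shrink--rotate--grow trick is shorter and uniform in $n$, while yours has the virtue of staying closer to the immersion argument and making the $n=2$ topology (tameness of planar arcs) explicit.
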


\begin{proof}
    If $n=1$, two paths with reverse orientations are not path-connected.

    Now, let $n >1$, and
    let $\eqpath{0},\eqpath{1} \in \epaths$.
    \journal{the following is not very precise}
    By Alexander's trick,\footnote{Two embeddings of the $n$-ball are isotopic,
    first proven by Alexander~\cite{alexander1923deformation}; see
    also~\cite[\textsection 4]{denne2008convergence}.}
    there exists~$s_0 \in [0,1]$ such that~$\apath_0' := \apath_0 |_{[s,1]}$
    and $s_1 \in [0,1]$ such that $\apath_1' := \apath_1
    |_{[s_1,1]}$, where~$s_0$ and~$s_1$ are nearly straight. Let
    $\angle$ be the angle between the segments $\apath_0'$
    and $\apath_1'$.
    Let $S \colon [\frac{1}{4},\frac{2}{4}] \to \epaths$ be the map rotating $\apath_0'$ by $\angle$ to become
    parallel with $\apath_1'$.
    Finally, let $\Gamma$ be the interpolation from $\apath_0' \circ S$ to $\apath_1'$.

    Define $P \colon [0,1] \to \epaths$ as the resulting composition:
    \begin{equation*}
        P(t) =
            \begin{cases}
                \apath_0 ~|_{[(1-t)s,1]}, & t \in [0,\frac{1}{4}] \\
                S(t), & t \in [\frac{1}{4},\frac{2}{4}] \\
                \Gamma(t) & t \in [\frac{2}{4},\frac{3}{4}] \\
                \apath_1 ~|_{[(1-t)s,1]}, & t \in [\frac{3}{4},1].
            \end{cases}
    \end{equation*}

    The steps attaining $\apath_0'$ and $\apath_1'$, as nothing else than a restriction
    of $\apath_0$ and $\apath_1$, are continuous. Moreover, $S$ is continuous as the rotation of
    $\apath_0'$, and $\Gamma$ is continuous by \lemref{gamma-cont}.
    By the arbitrariness of the constructed path and $\apath_0, \apath_1$,
    there is a family of continuous paths for any $\apath_0 \in \eqpath{0}, \apath_1 \in \eqpath{1}$,
    and $\epaths$ is path-connected.
\end{proof}

\begin{figure}
  \centering
  \begin{minipage}[b]{0.2\textwidth}
    \includegraphics[width=\textwidth]{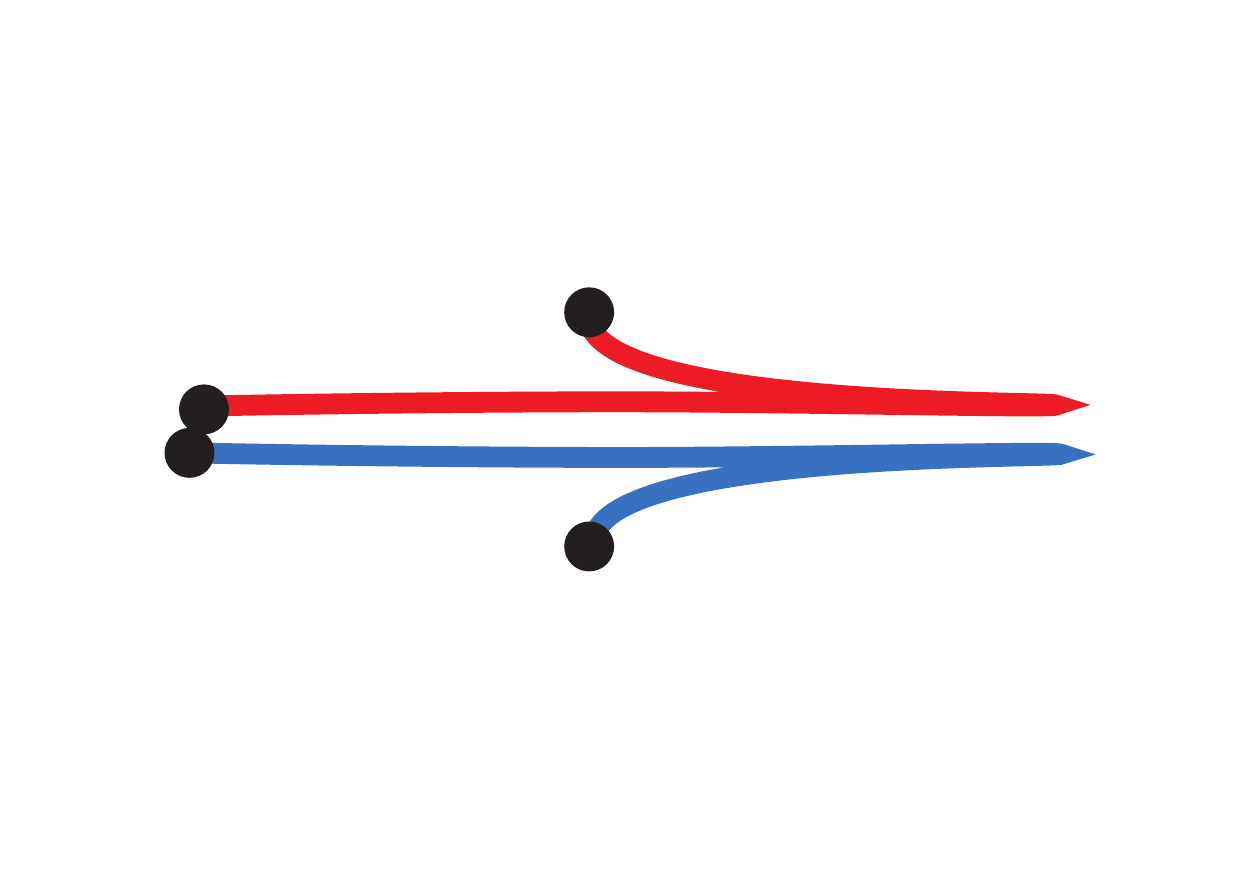}
  \end{minipage}
        \hspace{1cm}
  \begin{minipage}[b]{0.2\textwidth}
    \includegraphics[width=\textwidth]{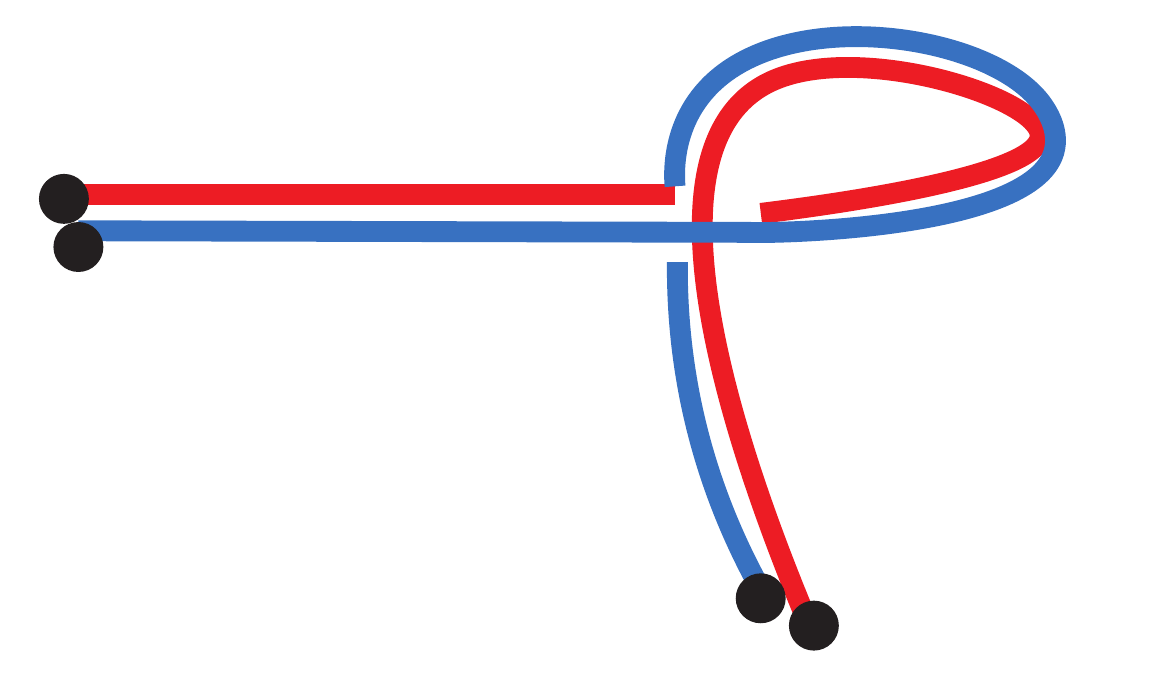}
  \end{minipage}
   \caption{Two embedded paths $\apath_0, \apath_1$ in $\R^2$ and $\R^3$ respectively, for which constructing a path
	$\Gamma:[0,1] \to \epaths, \Gamma(0) = \apath_0, \Gamma(1) = \apath_1$ is not possible
	without having~$\Gamma(t) \not \in \ball{d_{FP}}{\apath_1}{d_{FP}(\apath_0, \apath_1)}$ for some
	$t \in [0, 1]$.} \label{fig:dimembed}
\end{figure}

Moreover, in high dimensions we can construct a path in $\epaths$ not increasing the Fr\'echet distance.

\begin{theorem}[Metric Balls in $(\epaths, d_{FP})$]
    If $n \geq 4$, then balls with finite radius in the extended metric space~$(\epaths, d_{FP})$ are path-connected
    in~$\R^n$.
\end{theorem}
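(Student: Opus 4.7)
The plan is to follow the template of \lemref{cont-graph-balls}: given two elements $\eqpath{1}, \eqpath{2}$ of the ball $\B := \ball{d_{FP}}{\eqpath{0}}{\delta}\cap \epaths$, I would build a path in $\B$ from $\eqpath{1}$ to the center $\eqpath{0}$ and another from $\eqpath{0}$ to $\eqpath{2}$, and concatenate. Fixing representatives and applying linear interpolation (\defref{interp-graphs}) produces a continuous family $\Gamma \colon [0,1] \to \cpaths$ satisfying $d_{FP}(\Gamma_t, \apath_0) < \delta$ by the convexity estimate used in \lemref{cont-graph-balls}. Setting $\eps_t := \delta - d_{FP}(\Gamma_t, \apath_0)$, continuity on the compact interval $[0,1]$ yields a uniform positive slack $\eps := \inf_{t} \eps_t > 0$ available for subsequent perturbation.

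The obstruction to finishing is that a typical $\Gamma_t$ need not be an embedding: it may fail local injectivity via pauses or backtracks, or it may create genuine remote self-intersections. The former are handled by the same machinery used to prove \thmref{imm-path-balls}: pauses are rerouted via \lemref{thepause}, and the other immersion-level moves can be applied analogously, each costing arbitrarily little in $d_{FP}$. The remaining issue---simultaneously removing all genuine double points from the one-parameter family while staying within $\eps/2$ of $\Gamma$ in the sup norm---is where the hypothesis $n \geq 4$ enters.

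Consider the double-point map $F(t, s_1, s_2) := \Gamma_t(s_1) - \Gamma_t(s_2)$ on the $3$-dimensional set $\{(t,s_1,s_2) \in [0,1]^3 : s_1 < s_2\}$. A generic $C^0$-small perturbation of the family, keeping the endpoints $t \in \{0,1\}$ fixed since they are already embeddings, makes $F$ transverse to $0$, and since the target $\R^n$ has dimension $n \geq 4 > 3$, transversality forces $F^{-1}(0) = \emptyset$. Hence every perturbed $\Gamma_t$ is injective and, being a continuous map of the compact interval $[0,1]$, it is an embedding into $\R^n$. Choosing the perturbation smaller than $\eps/2$ in the sup norm keeps the new family inside $\B$, producing the desired path in $\epaths$.

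The hard part will be making the transversality argument rigorous in the low-regularity class of rectifiable paths, where classical general position is not directly available. I would sidestep this by first PL-approximating $\apath_0$ and each member of the family $\Gamma$ within Fr\'echet tolerance $\eps/8$, then perturbing piecewise-linearly on the open parameter region $t \in (0,1)$, and appealing to PL transversality. The codimension count $n > 3$ is tight: for $n \in \{2,3\}$, $F^{-1}(0)$ is generically of positive dimension or consists of isolated bad times that cannot be removed while keeping the family continuous, which is precisely the obstruction illustrated in \figref{dimembed}.
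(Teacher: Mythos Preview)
Your approach is correct and follows a genuinely different route from the paper. The paper's argument (deferred to \lemref{embed-path-balls}) handles self-crossings one at a time: it runs the linear interpolation, and each time a self-crossing is about to occur it locally pushes one strand into the fourth coordinate by at most half of the Fr\'echet slack that has been accumulated, arguing informally that this keeps the family inside the ball. Your proposal instead treats all double points at once via the general-position map $F(t,s_1,s_2)=\Gamma_t(s_1)-\Gamma_t(s_2)$ on a $3$-dimensional parameter space, and uses the codimension count $n\geq 4>3$ to make $F^{-1}(0)$ empty after a single $C^0$-small perturbation of the whole family. This transversality formulation is cleaner and makes the role of the hypothesis $n\geq 4$ explicit; it also buys you a uniform perturbation bound up front (your $\eps$ obtained by compactness, which could equally be read off from the estimate in \lemref{cont-graph-balls}), rather than tracking a running budget crossing by crossing as the paper does. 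The price is the regularity issue you already flag---PL approximation of the endpoints and of the family so that PL general position applies---whereas the paper's ``lift into the extra coordinate'' maneuver sidesteps that and needs no approximation, at the cost of being less formal about continuity of the resulting family and about why the accumulated perturbations never exhaust the slack. Both arguments share the same backbone: linear interpolation toward the center, the immersion-level repairs of \lemref{thepause} and its companions, and then a dimension-based removal of remote self-intersections.
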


\begin{proof}
    If $n \geq 4$, the same map $\Gamma$ given in \thmref{path-imm}
    is sufficient, except that self-crossings must be avoided.
    At each $s \in [0,1]$ where a self-crossing would occur,
    we perturb $\Gamma$ by a sufficiently small amount in order to avoid a self-crossing
    without increasing the Fr\'echet distance using the maneuver in \lemref{embed-path-balls}.
\end{proof}

A simple examination shows that metric balls are not path-connected in low dimensions.

\begin{theorem}[Metric Balls in $(\epaths, d_{FP})$]
    If $n \in \{1,2,3\}$, then balls with finite radius in the extended metric space~$(\epaths, d_{FP})$ are not path-connected
    in $\R^n$.
\end{theorem}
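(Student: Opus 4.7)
The plan is to dispose of $n=1$ directly from \thmref{path-emb} and to exhibit concrete counterexamples for $n \in \{2,3\}$, following the pairs depicted in \figref{dimembed}.

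For $n=1$, the space $\epaths$ itself is not path-connected: it has (at least) two components distinguished by orientation on $\R^1$. Taking the short increasing embedding $\apath_0(t) = \eps t$ and its reverse $\apath_1(t) = -\eps t$, a direct check gives $d_{FP}(\apath_0, \apath_1) = 2\eps$, which can be made smaller than any preassigned $\delta > 0$. Thus both paths sit inside the ball $\ball{d_{FP}}{\apath_0}{\delta}$, yet no continuous family of embeddings in $\R^1$ can connect paths of opposite orientation, so this ball is not path-connected.

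For $n \in \{2,3\}$, we select the pairs $(\apath_0, \apath_1)$ depicted in the two panels of \figref{dimembed}, whose Fréchet distance $d_{FP}(\apath_0,\apath_1)$ is strictly less than some target radius~$\delta$. The claim is that every continuous $\Gamma \colon [0,1] \to \epaths$ with $\Gamma(0)=\apath_0$ and $\Gamma(1)=\apath_1$ must pass through some intermediate $\Gamma(t)$ with $d_{FP}(\Gamma(t), \apath_1) \geq \delta$. The intuition is that although $\apath_0$ and $\apath_1$ have nearly coincident images, they are arranged so that locally the two arcs interlock in a way that cannot be unlocked while each intermediate embedding stays within a thin Fréchet tube of $\apath_1$: any maneuver to separate them must push a portion of the path outside this tube.

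The main obstacle is making this obstruction rigorous, particularly in $n=3$, where the ambient isotopy class of embedded arcs is trivial, so the obstacle is metric rather than purely topological. A careful argument would proceed by contradiction: assume a family $\Gamma$ staying inside the ball exists, use the Fréchet constraint to confine each $\Gamma(t)$ to a narrow tube around~$\apath_1$, and extract a discrete invariant that varies continuously in~$t$. For $n=2$, the invariant can be a planar winding or turning count of $\Gamma(t)$ relative to a fixed basepoint in the complement of the tube. For $n=3$, the invariant can be a signed count of intersections of $\Gamma(t)$ with a fixed transverse disk chosen so that $\apath_0$ and $\apath_1$ meet it a different number of times. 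In either case, the invariant is integer-valued and continuous in~$t$ under the tube confinement, yet differs at the endpoints, yielding the desired contradiction and proving that the ball is not path-connected.
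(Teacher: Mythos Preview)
Your proposal matches the paper's approach: orientation reversal for $n=1$, and the pairs from \figref{dimembed} for $n\in\{2,3\}$. The paper's own proof is in fact sketchier than yours---it simply names the counterexamples and asserts that any connecting family must leave the ball (with a bit more detail for $n=3$ in \lemref{balls-bad-dim3}), without proposing any invariant; your winding-number and transverse-disk outline goes further than what the paper provides, and both leave the rigorous obstruction at the level of a sketch.
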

\begin{proof}
    For $n=1$, let $\eqpath{} \in \epaths$. Let $\apath^{-1}:=\apath(1-t)$, and
    note that $\apath^{-1} \in \epaths$. If $n=2$, consider two paths within a fixed Fr\'echet ball
    that are much wider than their Fr\'echet distance.
    If $n=3$, consider two paths with small Fr\'echet distance that form a loop, with one section passing under
    the other. If these loops have reversed orientation between the two paths, the Fr\'echet distance must increase.
    See \figref{dimembed} for examples.
\end{proof}

In the setting for graphs, the path-connectedness property reduces to a knot theory problem if $n \leq 3$,
and is not maintained.
For $n \geq 4$, we use the existence of a sequence of Reidemeister
moves~\cite{reidemeister1927knoten}
\journal{citation and/or appendix for R-moves is needed (per convo w/ Brittany
and Erin
on 6/23)}
from any tame knot to another to construct paths in $\egraphs$.

\begin{theorem}[Path-Connectivity of $(\egraphs, d_{FG})$, $n \geq 4$]\label{thm:graphs-high}
    For all graphs $G$ and $n \geq 4$, the extended metric space~$(\egraphs(G),d_{FG})$ is path-connected.
    Moreover, connected components of the
    extended metric space~$(\egraphs,d_{FG})$ are in one-to-one correspondence with
    homeomorphism classes of graphs.
\end{theorem}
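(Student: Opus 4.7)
The plan is to first dispatch the ``moreover'' claim by the same argument as in \thmref{graph-cont} and \thmref{graph-imm}: whenever $G \not\cong H$, \defref{graphfrech} gives $d_{FG}((G,\phi),(H,\psi))=\infty$, so any continuous path in $(\egraphs,d_{FG})$ must stay within one homeomorphism class. Hence the substantive content is path-connectedness of each individual $\egraphs(G)$, which I address next.

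Given $\eqgraph{0},\eqgraph{1} \in \egraphs(G)$, I would start from the construction in \thmref{graph-imm}. Pick representatives $\phi_0$ of $\eqgraph{0}$ and $\phi_1\circ h_*$ of $\eqgraph{1}$, take the linear interpolation $\Gamma : [0,1] \to \cgraphs(G)$, and apply \lemref{thepause}, \lemref{ole-spinny}, and \lemref{q-tip} edgewise to obtain a continuous path in $\igraphs(G)$. The remaining obstruction is that each $\Gamma_t$ may have global self-intersections, i.e.\ coincidences between images of distinct points of $G$, which prevent $[\Gamma_t]$ from lying in $\egraphs(G)$.

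This is exactly where the hypothesis $n \geq 4$ enters. A path $t \mapsto \Gamma_t$ is equivalent to a single map $\widetilde{\Gamma}\colon G\times [0,1]\to \R^n$, and self-intersections at time $t$ correspond to pairs of distinct points in $G$ sent to a common point of $\R^n$. The source $(G\times G\setminus \Delta)\times [0,1]$ has dimension three, while the coincidence condition in $\R^n \times \R^n$ has codimension $n$; so for $n\ge 4$ a generic perturbation of $\widetilde{\Gamma}$ contains no coincidences at all. Equivalently, any transverse crossing that threatens to appear along $\Gamma$ can be resolved by pushing one strand off the other through an ``extra'' dimension, which is the content of the Reidemeister-style isotopy argument in $\R^4$: every tame embedding of a $1$-complex into $\R^n$, $n\ge 4$, is ambient isotopic to every other, and such an isotopy can be realized by finitely many localized moves. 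I would partition $[0,1]$ into finitely many subintervals on each of which at most one crossing event occurs, locally perturb $\Gamma$ in a small neighborhood of each such event so that the perturbed map remains an embedding and agrees with $\Gamma$ outside the neighborhood, and use continuity of $d_{FG}$ under arbitrarily small uniform perturbations to conclude.

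The main obstacle I anticipate is handling crossing events at or near vertices of $G$. Away from vertices, the general-position push-off is a routine transversality argument in codimension at least three. At a vertex, however, the star consists of several edges meeting at a single point, and any perturbation that resolves a crossing involving one incident edge must be extended continuously across the vertex without disturbing the other incident edges or reintroducing coincidences elsewhere. Producing a suitable local model at each vertex and gluing these local adjustments via a partition-of-unity style argument is what genuinely distinguishes the $1$-complex setting from the classical knot-theoretic Reidemeister-move picture, and is where the formal write-up will require the most care.
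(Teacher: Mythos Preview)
Your proposal follows essentially the same route as the paper: linearly interpolate between representatives and, at each threatened self-intersection, resolve the crossing using the extra ambient dimension available when $n\ge 4$---the paper phrases this as ``there exists a Reidemeister move allowing the crossing event to occur,'' while you cast it as a general-position/codimension count, but these describe the same mechanism. Your write-up is in fact more careful than the paper's brief sketch, which neither explicitly invokes the immersion-preserving maneuvers of \lemref{thepause}, \lemref{ole-spinny}, and \lemref{q-tip} beforehand nor addresses the vertex issue you flag.
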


\begin{proof}
        Let $G$ be a graph, and $\agraph_0, \agraph_1 \in \egraphs(G)$.
        If~$n\geq4$, any tame knot can be unwound by a sequence of
        Reidemeister moves into the unknot. Construct
        $\Gamma:[0,1]\to\egraphs(G)$ by linear interpolating
        until some~$t \in (0,1)$ causes $\Gamma_t$ to self-intersect. At~$t$, there
        exists a Reidemeister move allowing the crossing event to occur.
        Hence, any sequence of knots and free
        edges comprising $\phi_0$ and $\phi_1$ can be unwound to a sequence of unknots and straight edges, and
        then interpolated accordingly. Consequently, there exists a
        path from~$\agraph_0$
        to~$\agraph_1$ in~$(\egraphs(G), d_{FG})$. Note that we require~$\phi_0, \phi_1$
        are rectifiable in \secref{background}.
    \journal{BTF: this isn't a part of a proof, but might make a good remark
    somewher:
        Without this
        requirement,~$\agraph_0$ and~$\agraph_1$ could comprise wild knots, and constructing such a path could
        have infinitely many Reidemeister~moves.
    }
\end{proof}

In dimension $4$ or higher, the path-connectivity of balls in $\egraphs(G)$
is shown in the same way as for paths.

\begin{theorem}[Metric Balls in $(\egraphs(G), d_{FG})$]
    For all graphs $G$ and $n \geq 4$, metric balls in the space~$(\egraphs(G),
    d_{FG})$ are path-connected.
\end{theorem}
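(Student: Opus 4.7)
The plan is to mimic the argument for path-embedding balls, using \thmref{graphs-high} as the graph analogue of \thmref{path-imm}. Fix a center $\eqgraph{} \in \egraphs(G)$, a radius $\delta > 0$, and set $\B := \ball{d_{FG}}{\agraph}{\delta} \cap \egraphs(G)$. Given any two classes $\eqgraph{0}, \eqgraph{1} \in \B$, the goal is to construct a continuous path $\Gamma \colon [0,1] \to \B$ with $\Gamma(0) = \eqgraph{0}$ and $\Gamma(1) = \eqgraph{1}$.

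First, I would apply \thmref{imm-graph-balls} to obtain a continuous path $\tilde{\Gamma} \colon [0,1] \to \ball{d_{FG}}{\agraph}{\delta} \cap \igraphs(G)$ joining $\eqgraph{0}$ to $\eqgraph{1}$ inside the ball of immersions. This takes care of linear interpolation between edges together with the pause- and backtracking-avoidance moves of \lemref{thepause}, \lemref{ole-spinny}, and \lemref{q-tip}. The only obstruction to $\tilde{\Gamma}$ taking values in $\egraphs(G)$ is that some $\tilde{\Gamma}(t)$ may introduce a self-intersection, either between two distinct edges or of a single edge with itself.

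Next, I would repair each such self-intersection in the spirit of \thmref{graphs-high}. At every time $t$ where a transverse crossing is introduced, I would use $n \geq 4$ to push one of the two sheets off the other by a small displacement in the extra codimension, which is the high-dimensional analogue of a Reidemeister-type crossing change and can be localized in an arbitrarily small neighborhood of the crossing point. The resulting modification $\Gamma$ is continuous and takes values in $\egraphs(G)$.

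Finally, I would verify that $\Gamma(t) \in \B$ for all $t$. Since each crossing-resolution perturbation can be chosen arbitrarily small in the sup norm, and $d_{FG}(\agraph, \Gamma(t)) \leq || \agraph - \Gamma(t) \circ h ||_{\infty}$ for any admissible homeomorphism $h$, it suffices to take each perturbation much smaller than the positive margin $\delta - \max\{ d_{FG}(\agraph,\agraph_0), d_{FG}(\agraph,\agraph_1) \}$, which is strictly positive because $\B$ is open and $\eqgraph{0}, \eqgraph{1}$ lie strictly inside it. The main obstacle is ruling out accumulation: if the interpolation produces many crossing events along the way, the perturbations must sum to less than the margin. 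Rectifiability of the maps together with compactness of $G$ ensures that a generic interpolation path encounters only finitely many crossing times, so a summable choice of perturbation sizes completes the plan.
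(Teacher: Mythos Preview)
Your proposal is essentially correct and follows the same route as the paper. The paper's own proof is a single sentence: it says to repeat the argument of \lemref{embed-path-balls} edge-by-edge, using Reidemeister moves to resolve crossings. Your plan does exactly this, except that you route through \thmref{imm-graph-balls} to obtain the underlying path in the ball of immersions before repairing global self-intersections; the paper instead appeals directly to the path-embedding-ball lemma. Both approaches amount to ``interpolate, then push crossings off into the extra codimension by arbitrarily small perturbations.''

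One small correction: your margin bound $\delta - \max\{d_{FG}(\agraph,\agraph_0), d_{FG}(\agraph,\agraph_1)\}$ is the slack at the \emph{endpoints}, but the immersion path $\tilde{\Gamma}(t)$ may approach the boundary of $\B$ at intermediate times, so the available slack at a crossing time $t$ is $\delta - d_{FG}(\agraph,\tilde{\Gamma}(t))$, which can be smaller. This is harmless, since each crossing-resolution perturbation is local in time and can be taken smaller than the (positive) slack at that particular~$t$; there is no global accumulation to worry about. The paper's argument in \lemref{embed-path-balls} handles this in a slightly different way, by tracking how much the Fr\'echet distance has already decreased before each crossing and spending at most half of that decrease on the perturbation.
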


\begin{proof}
The proof is identical to that in \lemref{embed-path-balls}, but Reidemeister
moves are used for each edge in a graph rather than a single segment.
\end{proof}

\section{Conclusion}\label{sec:conclusion}
In this paper, we studied some fundamental topological properties of 
spaces of paths and graphs in Euclidean space under the Fr\'echet distance. 
In particular, we investigated  metric properties of the Fr\'echet distance 
on paths and graphs, as well as studying the path-connectedness of metric balls in the space of
such graphs.  While this work is theoretical and mathematical in nature, we feel that 
establishing the underlying properties of the topological spaces it
can define provides an important theoretical backdrop, which is especially critical 
due to the widespread popularity of the Fr\'echet distance in computational
geometry, and the growing popularity of its extension for graphs. 
Our contribution begins a careful study of the Fr\'echet distance and its topological 
properties. Extensions to this work abound, and include examining core topological 
properties of other distance measures in computational geometry, as well as
other important properties of the Fr\'echet~distance.

\section*{Acknowledgements}
We thank our coauthors from~\cite{rogue-1} for the initial conversations that
inspired this paper. In addition, the authors thank the National Science
Foundation; specifically, Erin Chambers is partially supported by NSF awards
2106672 and 1907612, Brittany Terese Fasy and Benjamin Holmgren are partially
supported by award 1664858, Fasy is also supported by award 2046730, and Carola
Wenk is partially supported by award 2107434.

% For natbib users:
\bibliographystyle{abbrv}
\bibliography{references}
% For bibLaTeX users:
% \printbibliography

\newpage
\appendix

\section{Distances and Topology}\label{append:basics}
Let $\bar{\R}$ denote the extended real line: $\bar{\R}=\R \cup \pm \infty$.
We now provide the basic definitions relating to distances and topology used
throughout this paper.

\subsection{Graphs}\label{append:graphs}
Graphs are a central object studied in this paper.

Throughout this paper, we use the term \emph{graph} to mean a
\emph{multi-graph}.
A multi-graph $G=(V,E)$ is a finite set of vertices $V$ and edges $E$.
Self-loops and multiple edges between two vertices are allowed in this setting.
A graph is an example of a more general structure called a CW complex, which we
topologize as follows: (1) the topology on $G$ restricted to $V$ is the discrete topology;
(2)  for a edge $e$, the open sets restricted to is closure $\bar{e}$ are those induced by
the subspace topology on~$[0,1] \subset \R$ and a homeomorphism $[0,1]\to
\bar{e}$; (3) we take
the quotient topology on $\left( \cup_{v \in V} \right) \cup \left( \cup_{e\in E} \bar{e}
\right)$.

\subsection{Fr\'echet Distance}\label{append:frechet}

We defined the path and graph Fr\'echet distances in \secref{background}.  The
path Fr\'echet distance is well-studied~\cite{Frechet, Fang2021, buchin2010frechet, bos-cfdrvs-17, Buchin2020, Alt1995, Aronov2006,Agarwal2014,Driemel2012,Alt2001,Driemel2013,Cook2010,CHAMBERS2010295,Driemel2022,mapmatching,Colombe2021,Guibas2011}.   The graph
Fr\'echet distance has been less studied, but many results for paths transfer to
graphs.

\journal{give example where infimum is not attained}

The proof of the following lemma follows from the definition of Fr\'echet
distance and the definition of infimum.
\begin{lemma}[Approximator]\label{lem:frapprox}\label{lem:hexist}
    For all graphs~$G$,
    if~$\eqgraph{0},\eqgraph{1} \in \cpaths(G)$, then
    for every $\eps>0$, there exists a homeomorphism $h_* \colon G \to G$ such
    that
    $$
        ||\agraph_0 - \agraph_1 \circ h ||_{\infty} < d_{FG}(\agraph_0,\agraph_1) + \epsilon.
    $$
\end{lemma}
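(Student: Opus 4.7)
The plan is to treat this as a direct unpacking of the definition of infimum, with a preliminary sanity check that passing to equivalence classes does not cause any ambiguity. The statement is essentially the well-known characterization that an infimum can be approached to within any $\eps > 0$ by an element of the underlying set; here the underlying set is the collection of self-homeomorphisms of $G$.

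First I would fix arbitrary representatives $\agraph_0 \in \eqgraph{0}$ and $\agraph_1 \in \eqgraph{1}$. Because the quotient defining $\cgraphs(G)$ collapses maps at Fr\'echet distance zero, the pseudo-metric / triangle inequality properties established in \thmref{cont-metric-prop} immediately give that $d_{FG}(\agraph_0,\agraph_1)$ is independent of the choice of representatives. This lets us write the quantity of interest unambiguously as
\[
    d_{FG}(\agraph_0,\agraph_1) \;=\; \inf_{h} \,\|\agraph_0 - \agraph_1 \circ h\|_{\infty},
\]
where $h$ ranges over self-homeomorphisms of $G$ (a nonempty set, since the identity is available).

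Next I would apply the $\eps$-characterization of the infimum: for any real number $\eps > 0$, there exists some element $h_*$ of the set over which we are taking the infimum such that the value at $h_*$ is strictly less than the infimum plus $\eps$. Concretely, set $c := d_{FG}(\agraph_0, \agraph_1) \in [0,\infty)$; since $c+\eps$ is not a lower bound of the set of values $\{\|\agraph_0 - \agraph_1 \circ h\|_\infty\}$, some $h_*$ must witness
\[
    \|\agraph_0 - \agraph_1 \circ h_*\|_{\infty} \;<\; c + \eps \;=\; d_{FG}(\agraph_0,\agraph_1) + \eps,
\]
which is exactly the conclusion of the lemma.

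There is essentially no hard step here; the only subtlety worth flagging is that the infimum need not be attained (the note preceding the lemma in the appendix promises an example), so one cannot claim an exact minimizer — the proof must phrase the conclusion with a strict inequality and an $\eps$-slack, exactly as stated. A secondary minor point is that the statement as typeset mentions $\cpaths(G)$ and writes $h$ on the right-hand side while quantifying $h_*$ on the left; in a clean write-up I would normalize this to $\cgraphs(G)$ and use $h_*$ consistently throughout, so the reader is not distracted by a cosmetic mismatch.
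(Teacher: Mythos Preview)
Your proposal is correct and follows essentially the same route as the paper: invoke the definition of $d_{FG}$ as an infimum and then apply the $\eps$-characterization of the infimum to extract an approximating homeomorphism $h_*$. The only cosmetic difference is that the paper picks $h_*$ with slack $\eps/2$ rather than $\eps$ (anticipating a later use), and it does not pause, as you do, to justify independence of representatives---but neither point changes the argument.
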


\begin{proof}
    By \defref{graphfrech},
    $$d_{FG}(\eqgraph{1}, \eqgraph{2}) = \inf_h||\agraph_1 - \agraph_2 \circ
    h||_{\infty}.$$
    Then, by the definition of infimum, for every $\eps > 0$, there exists
    $h_*:G \to G$ such that
    \begin{align*}
        ||\agraph_1 - \agraph_2 \circ h_*||_{\infty}&
        < \inf_h||\agraph_1 - \agraph_2 \circ h||_{\infty} + \eps/2 \\
        & = d_{FG}(\eqgraph{1}, \eqgraph{2}) + \eps/2,
    \end{align*}
    as was to be shown.
\end{proof}

\journal{relationship to Hausdorff}

\subsection{Defining Spaces from Distances and Metrics}\label{append:distdefs}

Given a set $\X$ and a $\comparefcn{d}{\X}$, we
topologize $\X$ as follows:

\begin{definition}[The Open Ball Topology]\label{def:open-ball}
    Let~$\X$ be a set and $\comparefcn{d}{\X}$ a distance function.  For each~$r > 0$ and~$x \in \X$,
    let~$\ball{d}{x}{r} := \{ y \in \X ~|~ d(x,y) < r \}$. The open ball
    topology on $\X$ with respect to $d$ is the topology generated
    by~$\{ \ball{d}{x}{r} ~|~ x \in \X, r>0\}$.
    We call~$(\X,d)$ a \emph{distance space}.
\end{definition}

In words, $\ball{d}{x}{r}$ denotes the open ball of radius $r$ centered
at $x$ with respect to $d$.
We use these open balls to generate a topology on $\X$,
allowing~$x$ to range over $\X$ and $r$ to range over
all positive real numbers.

We are particularly interested in distance functions that are either a
pseudo-metric or a metric. These are defined as follows.

\begin{definition}[Pseudo-Metric]\label{def:pseudometric}
    Let~$\X$ be a set and let~$\comparefcn{d}{\X}$ be a distance function.
    We call $d$ a \emph{pseudo-metric} on $\X$ if $d$ satisfies the following:
    \begin{itemize}
        \item Finiteness: $d(x,y) < \infty$ for all $x,y \in \X$.
        \item Identity: $d(x,x)=0$ for all $x \in \X$.
        \item Symmetry: $d(x,y)=d(y,x)$ for all $x,y \in \X$.
        \item Subadditivity (the triangle inequality): $d(x,z)\leq d(x,y) + d(y,z)$ for all $x,y,z \in \X$
    \end{itemize}
    If $d$ satisfies everything except finiteness, then we call~$d$ an
    \emph{extended pseudo-metric}.
\end{definition}

In order to be a metric, $d$ must fulfill stricter criteria:

\begin{definition}[Metric]
    Let~$\X$ be a set and let the function~$\comparefcn{d}{\X}$ be a pseudo-metric.
    We say that $d$ is a \emph{metric} if $d$ also
    satisfies:
    \begin{itemize}
        \item Seperability: for any $x,y \in \X$, if $x\not=y$, then~$d(x,y) >0$.
    \end{itemize}
\end{definition}

Often, if $(\X,d)$ is a pseudo-metric space, a standard procedure is to define an equivalence class for $x,y \in \X$ where
$x \sim y$ if $d(x,y) = 0$. Then, the quotient space~$\bigslant{\X}{\sim}$ is a metric space.

Common examples of metrics on function spaces are those induced by $L_p$-norms.
For example, let $(\Y,d_{\Y})$ be
distance space, let $\X$ be any topological space, and let~$f,g \colon \X \to \Y$.
Then, the distance induced by the~$L_{\infty}$-norm between $f$ and $g$ is:
$$||f-g||_{\infty} = \max_{x \in \X} d_{\Y}(f(x),f(y)).$$

\subsection{Paths and Maps}\label{append:path}

With the basic definitions from topology in hand, we are equipped to define a property
of fundamental interest in topology: path-connectedness.

\begin{definition}[Path]\label{def:path}
    A \emph{path} in a topological space $\X$ between two elements $a,b \in \X$,
    is defined to be a continuous map $\gamma:[0,1] \to \X$ where $\gamma(0)=a$, and
    $\gamma(1) = b$.
\end{definition}

Given two paths $\apath_1,\apath_2 \colon [0,1] \to \X$ such that
$\apath_1(1)=\apath_2(0)$, we combine them by taking both at double-speed.
This is called the concatenation of paths. In particular, $\apath_1 \# \apath_2
\colon [0,1] \to \R^n$ is defined by:
$$
    \apath_1 \# \apath_2 (t) :=
        \begin{cases}
            \apath_1(2t) & t \in [0,0.5].\\
            \apath_2(2t-1) & \text{otherwise}.
        \end{cases}
$$
Given one path $\apath \colon [0,1] \to \X$ and an interval $[a,b] \subseteq
[0,1]$, the restriction of $\apath$ to $[a,b]$ is also a path, given~by:
$$
    \apath |_{[a,b]}(t) := \apath(a + t(b-a)).
$$

With the definition of paths, we define a primary property of interest in this paper:
path-connectivity.

\begin{definition}[Path-Connectivity]\label{def:path-con}
    A topological space $\X$ is called \emph{path-connected} if there exists a \emph{path}
    between any two elements in $\X$.
\end{definition}

We also define the path-connectedness property specifically for distance balls:

\begin{definition}[Path-Connectivity of Balls]\label{def:ball-con}
    Let~$(\X,d)$ be a topological space, let $x \in \X$.
    We say that the distance balls in~$(\X,d)$ are \emph{path-connected} if for
    every $x \in \X$ and~$r \in \R_{\geq 0}$, the distance ball~$\ball{d}{y}{r}$ is path-connected.
\end{definition}

And, the \emph{length} of a path in a distance space
is given~by:

\begin{definition}[Length]\label{def:length}
    Let~$(\X, d)$ be a distance space and let $\gamma$ be a \emph{path} in $(\X, d)$.
    Let~$\mathcal{P}$ be the set of all finite subsets $P= \{t_i\}$  of $[0,1]$ such that
    such that~$0 = t_0 < t_1 < \ldots < t_n = 1$. The
    \emph{length} $L_d(\gamma)$
    of~$\gamma$ is:
    \[ L_d(\gamma) := \sup_{P \in \mathcal{P}} \sum_{i=1}^{n}{d(\gamma(t_i), \gamma(x_{i-1}))}. \]
\end{definition}

Additionally, it is often useful in our setting to reparameterize paths, both
to define the Fr\'echet distance and to maintain properties such as injectivity in a map.

\begin{definition}[Reparameterization]
    Let $\X,\Y$ be a topological spaces, $\phi \colon \X \to \Y$, and $h \colon
    \X \to \X$ is a homeomorphism. Then, we call $\phi \circ h$ a
    reparameterization of~$\phi$.
    In the setting where $\X=[0,1]$ and $h(0)=0$, we call $\phi \circ h$ an
    \emph{orientation-preserving} reparameterization.
\end{definition}

\section{Omitted Details for Path-Connectivity}
In this appendix, we provide additional context for the proofs of path-connectivity in
\secref{path-conn-sec}.

\subsection{Additional Details on Interpolation}\label{append:lin-interp}

Given two continuous maps of the same graph into~$\R^n$, we define the
interpolation between them.  First, we need to define linear combinations of
graphs (and paths).

\begin{definition}[Linear Combination of Graphs]\label{def:combo}
    Let $G$ be a graph, let $\phi_1,\phi_2:[0,1]\to \R^n$ be continuous,
    rectifiable maps,
    and~$c_1,c_2
    \in \R$. Then, the linear combination~$\phi= c_1 \phi_1 + c_2\phi_2$
    is defined as follows: the map~$\phi \colon G \to \R^n$ is defined
    by~$\phi(x) := c_2
    \phi_1(x) + c_2 \phi_2(x)$.  In this case, we may also say $(G,\phi)$ is a
    linear combination of graph-map pairs $(G,\phi_1)$ and~$(G,\phi_2)$.
\end{definition}

In this definition, we observe that $\phi$ is continuous (since $\phi_0$ and~$\phi_1$ are
continuous), which means that linear combinations are well-defined in the set of all continuous,
rectifiable maps. It is not well defined in the space~$\cgraphs(G)$ overall.
\journal{BTF: this is well-defined in the set of all maps, but not in the equiv.
classes. maybe have an example to demonstrate?}

\begin{lemma}[Linear Interpolation is Continuous]\label{lem:gamma-cont}
    For all graphs~$G$,
    linear interpolation between graphs in~$\cgraphs(G)$
    (and hence between homeomorphic graphs in~$\cgraphs$)
    is a continuous function.
\end{lemma}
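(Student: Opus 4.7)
The strategy is to bound $d_{FG}$ from above by the sup-norm distance between the representative maps, and then observe that linear interpolation is Lipschitz with respect to the sup-norm. Since $d_{FG}$ is well-defined on equivalence classes, this upper bound descends to the quotient and yields continuity of $\Gamma$.

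\emph{Step 1: Reduce to a sup-norm estimate.} Fix $t_0 \in [0,1]$ and consider $\Gamma(t) = (G,\phi_t)$ where $\phi_t = (1-t)\phi_0 + t(\phi_1 \circ h_*)$ for the homeomorphism $h_*$ chosen in \defref{interp-graphs}. Taking the identity map as the homeomorphism in \defref{graphfrech} gives the inequality
\[
 d_{FG}(\Gamma(t),\Gamma(t_0)) \;\leq\; \|\phi_t - \phi_{t_0}\|_{\infty}.
\]
This is the key inequality; if we can make the right-hand side small, we are done.

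\emph{Step 2: Compute the sup-norm difference.} A direct expansion gives
\[
 \phi_t - \phi_{t_0} \;=\; (t - t_0)\bigl(\phi_1 \circ h_* - \phi_0\bigr),
\]
so that $\|\phi_t - \phi_{t_0}\|_\infty = |t - t_0|\cdot M$, where $M := \|\phi_1 \circ h_* - \phi_0\|_\infty$. Because $G$ is a finite graph (hence compact when topologized as a CW complex per \appendref{graphs}) and $\phi_0$, $\phi_1 \circ h_*$ are continuous maps into $\R^n$, the difference $\phi_1 \circ h_* - \phi_0$ is a continuous function on a compact domain and therefore $M < \infty$.

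\emph{Step 3: Conclude continuity.} Given $\epsilon > 0$, set $\delta := \epsilon/(M+1)$ (the ``$+1$'' absorbs the degenerate case $M = 0$, where $\phi_t = \phi_{t_0}$ for all $t$ and continuity is immediate). Then $|t - t_0| < \delta$ implies
\[
 d_{FG}(\Gamma(t),\Gamma(t_0)) \;\leq\; |t-t_0|\cdot M \;<\; \epsilon,
\]
which is precisely the $\epsilon$-$\delta$ statement of continuity of $\Gamma$ at $t_0$. Since $t_0$ was arbitrary, $\Gamma$ is continuous as a map $[0,1] \to \cgraphs(G)$.

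\emph{Main obstacle.} The only subtlety is that $\Gamma$ is valued in the quotient $\cgraphs(G)$, so one must verify that bounding $d_{FG}$ on representatives suffices; this is automatic because the quotient metric is defined by taking an infimum over homeomorphisms and we have provided a valid homeomorphism (the identity). No analysis of the equivalence classes themselves is needed. The argument carries over verbatim with $G = [0,1]$ to cover the path case, so $\Gamma$ is also continuous as a map into $\cpaths$.
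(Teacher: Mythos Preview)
Your proof is correct and follows essentially the same $\varepsilon$--$\delta$ Lipschitz argument as the paper: both bound $d_{FG}(\Gamma_t,\Gamma_s)$ by $|t-s|$ times a fixed constant depending on the two endpoint maps. Your version is in fact slightly cleaner, since you bound $d_{FG}$ above by the sup-norm via the identity homeomorphism (avoiding the paper's manipulation of the infimum) and you explicitly handle the degenerate case $M=0$, which the paper's choice $\delta=\varepsilon/d_{FG}([\phi_0],[\phi_1])$ does not.
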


\begin{proof}
    Let $\eqgraph{0}, \eqgraph{1} \in \cgraphs(G)$. Let $\Gamma \colon [0,1] \to
    \cgraphs(G)$ be the linear interpolation from $\agraph_0$ to $\agraph_1$.

    We prove that $\Gamma$ satisfies the $\eps$-$\delta$ definition of
    continuity.
    Let~$\eps > 0$.
    Set $\delta = \frac{\eps}{d_{FG}(\eqgraph{0}, \eqgraph{1})}$.
    Let $s,t \in [0,1]$ such that $|s-t|< \delta$.
    Then, we have
    \begin{align*}
        & d_{FG} \left( [\Gamma_t], [\Gamma_s] \right) \\
        &~=d_{FG}([(1-t)\agraph_0 + t\agraph_1], [(1-s)\agraph_0 + s\agraph_1]) \\
        &~=\inf_h~|| ((1-t)\agraph_0 + t\agraph_1) - ((1-s)\agraph_0
            + s\agraph_1)\circ h||_{\infty},
    \end{align*}
    where $h$ ranges over all reparameterizations of $[0,1]$.
    Continuing, we find:
    \begin{align*}
        & d_{FG} \left( [\Gamma_t], [\Gamma_s] \right) \\
        &~=\inf_h~|| (s-t)\agraph_0 + (t-s)\agraph_1 \circ h||_{\infty} \\
        &~=|t-s| \inf_h ~|| -\agraph_0 + \agraph_1 \circ h||_{\infty} \\
        &~<\delta \cdot \inf_h~|| -\agraph_0 + \agraph_1 \circ h||_{\infty} \\
       %  &~=\frac{\eps}{d_{FG}(\agraph_0, \agraph_1)} \cdot \inf_h || -\agraph_0 +\agraph_1 \circ h||_{\infty} \\
       %  &~=\frac{\eps \cdot \inf_h~|| \agraph_1\circ h -
       %      \agraph_0||_{\infty}}{\inf_h ||\agraph_0 - \agraph_1\circ h||_{\infty}} \\
        &~=\eps,
    \end{align*}
    by definition of $\delta$.

    And so, we have shown that $\Gamma$ satisfies the
    the $\eps$-$\delta$ definition of continuity.
    In extended metric spaces, the $\eps$-$\delta$ definition of continuity is
    equivalent to topological continuity
    (e.g., see proof
    in~\cite[Lemma 7.5.7]{Lebl} for metric spaces).
    Thus, we conclude that linear interpolation between
    graphs in $\cgraphs{G}$ is continuous.
\end{proof}

Setting~$G=[0,1]$, an identical argument shows that linear interpolation
between paths in $\cpaths$ is also~continuous.

\begin{cor}[Linear Interpolation between Paths]
    For all $\eqpath{0}, \eqpath{1} \in \cpaths$, the linear interpolation
    from $\apath_0$ to~$\apath_1$ is continuous.
\end{cor}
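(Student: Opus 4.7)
The plan is to deduce this corollary directly by specializing \lemref{gamma-cont} to the case $G = [0,1]$. Under this identification, a continuous rectifiable path $\apath \colon [0,1] \to \R^n$ is precisely a continuous rectifiable graph-map $(G, \apath)$, and \defref{interp-graphs} recovers the usual linear-interpolation formula $\apath_t = (1-t)\apath_0 + t\apath_1$ (taking the homeomorphism $h_\ast$ to be the identity on $[0,1]$). The only formal difference between the two settings is that $d_{FP}$ optimizes over orientation-preserving self-homeomorphisms of $[0,1]$ rather than over all self-homeomorphisms of $G$; since the proof of \lemref{gamma-cont} only ever uses a \emph{specific} choice of homeomorphism to upper-bound the infimum, this narrower class is harmless here.

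Concretely, I would verify the $\eps$-$\delta$ definition of continuity for $\Gamma \colon [0,1] \to \cpaths$ by fixing representatives $\apath_0 \in \eqpath{0}$ and $\apath_1 \in \eqpath{1}$ and bounding
\[
d_{FP}(\Gamma_t, \Gamma_s) \leq \| (1-t)\apath_0 + t\apath_1 - (1-s)\apath_0 - s\apath_1 \|_\infty = |t-s| \cdot \| \apath_1 - \apath_0 \|_\infty,
\]
where the inequality follows by choosing the reparameterization $r = \mathrm{id}_{[0,1]}$ (which is orientation-preserving) in the infimum defining $d_{FP}$, and the equality is the scalar factoring used in \lemref{gamma-cont}. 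The right-hand side is finite because $\apath_0$ and $\apath_1$ are rectifiable and $[0,1]$ is compact. Setting $\delta := \eps / \| \apath_1 - \apath_0 \|_\infty$ (handling the degenerate case $\apath_0 \equiv \apath_1$ trivially, since $\Gamma$ is then constant in $\cpaths$) yields $d_{FP}(\Gamma_t, \Gamma_s) < \eps$ whenever $|s - t| < \delta$.

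The equivalence between the $\eps$-$\delta$ and topological formulations of continuity in extended metric spaces, already invoked in \lemref{gamma-cont} via~\cite[Lemma 7.5.7]{Lebl}, then concludes the argument. There is no substantive obstacle: the proof is a nearly verbatim transcription of the graph case to the path setting, which is exactly what the author's phrasing ``an identical argument'' suggests.
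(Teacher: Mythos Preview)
Your proposal is correct and follows essentially the same approach as the paper, which simply states that setting $G=[0,1]$ in \lemref{gamma-cont} gives the result by an identical argument. Your choice to bound by $\|\apath_1-\apath_0\|_\infty$ (via the identity reparameterization) rather than by $d_{FP}(\apath_0,\apath_1)$ is a harmless variant of the same $\eps$--$\delta$ computation.
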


\begin{figure}
    \centering
    \includegraphics[width=.3\textwidth]{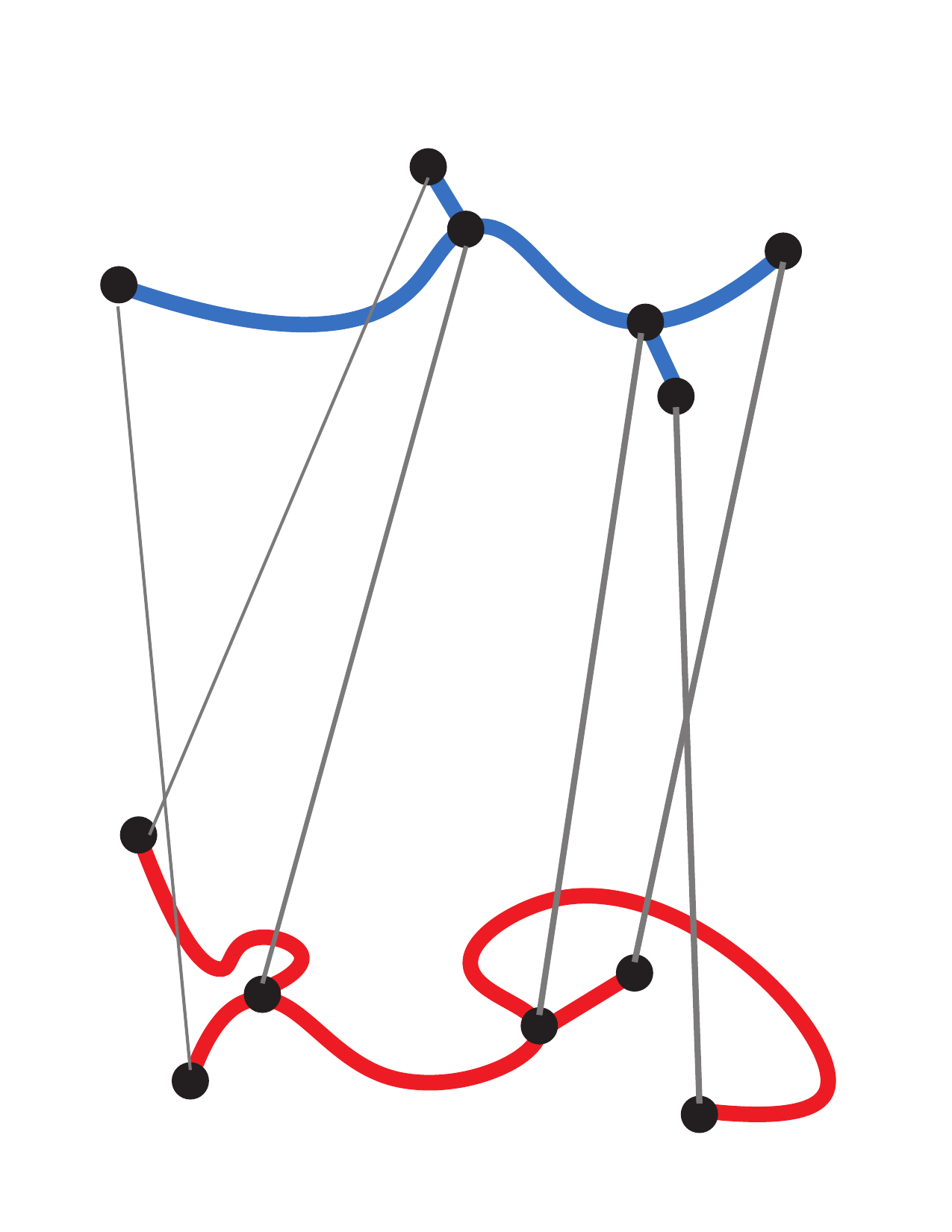}
    \caption{
    The interpolation between two embeddings of a graph in $\R^n$.
    For simplicity, we show the interpolation between the vertices in the
    embeddings,
    and the interpolation between edges is inferred accordingly.
    }\label{fig:interp}
\end{figure}

\subsection{Paths Between Immersions in Greater Detail}

This subsection includes additional details to maintain local injectivity for
an arbitrary path $\Gamma:[0,1]\to\ipaths$.
We begin by examining the case where pausing occurs on the closed interval
$[a,b]\subset[0,1]$ in the domain of an immersed path $\apath_t \in \Gamma_t$,
and the interval includes either~$0$ or~$1$. We subvert this by finding an
alternate (but `close') path $\Gamma^*$.

\begin{lemma}[Pausing at Endpoints]\label{lem:closedpause}
    Let $\eqpath{0}, \eqpath{1} \in \ipaths$, and let $\Gamma:[0,1]\to\cpaths$
    be a path in $\cpaths$ starting at~$\apath_0$
    and ending at $\apath_1$.
    Suppose that there exists an interval~$[t_1,t_2] \subset [0,1]$ such that
    for~$t
    \in [0,1]\setminus (t_1,t_2)$, $\Gamma_t$ is an immersion and, for~$t \in (0,1)$, $\Gamma_t$
    has a single pause (and no other violations of local injectivity).
    Then, there exists an alternate path~$\Gamma^*$ in $\cpaths$ starting at
    $\apath_0$ and ending at~$\apath_1$ such
    that~$\Gamma^*$ is a path in~$\ipaths$.
\end{lemma}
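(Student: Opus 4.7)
The construction of \lemref{thepause} fails at endpoints because the ``stretch'' maneuver needs room on both sides of the pause, and at $0$ or $1$ one of those sides is absent. My plan is to replace ``stretching'' by ``trimming'': simply discard the paused sub-interval and reparameterize the surviving portion of the path back onto $[0,1]$.

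Without loss of generality, assume the pause includes the left endpoint, so for every $t \in (t_1,t_2)$ there is $b_t \in (0,1)$ with $\apath_t$ constant on $[0,b_t]$ and no further local-injectivity failure on $[b_t,1]$. I would define
\[
    \apath_t^*(s) := \apath_t\bigl(b_t + s(1-b_t)\bigr), \qquad s \in [0,1],
\]
and set $\Gamma^*(t) = \apath_t^*$ on $(t_1,t_2)$ and $\Gamma^*(t) = \apath_t$ elsewhere. First I would check that $\apath_t^* \in \elems\ipaths$: the map $s \mapsto b_t + s(1-b_t)$ is an affine homeomorphism $[0,1]\to[b_t,1]$, so continuity and rectifiability transfer, and since the only violation of local injectivity in $\apath_t$ lived on $[0,b_t]$, the trimmed path is locally injective.

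Next I would verify that $\Gamma^*$ is continuous as a map $[0,1]\to\cpaths$. On $(t_1,t_2)$, continuity reduces to continuity of $t\mapsto b_t$, which follows from Fr\'echet-continuity of $\Gamma$ together with the fact that $b_t$ is the right endpoint of the maximal sub-interval on which $\apath_t$ is constant. The critical joints are $t=t_1$ and $t=t_2$, where I need $b_t\to 0$ so that $\apath_t^*\to\apath_{t_1}$ (respectively $\apath_{t_2}$) in $d_{FP}$.

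This boundary behaviour is the step where I expect the main obstacle. In the setting where $\Gamma$ arises from the linear interpolation of \thmref{path-imm}, the pause is born from nothing at $t_1$ and dies into nothing at $t_2$, so I expect $b_t\to 0$ automatically. Should this fail in greater generality, I would force continuity by replacing $b_t$ with $\lambda(t)\, b_t$, where $\lambda:[t_1,t_2]\to[0,1]$ is a continuous tent function with $\lambda(t_1)=\lambda(t_2)=0$ and $\lambda(t)=1$ on a compact interior sub-interval; any residual open-interval pause left behind by incomplete trimming can then be removed by a single application of \lemref{thepause}. Finally, $\Gamma^*(0)=\apath_0$ and $\Gamma^*(1)=\apath_1$ hold by construction since $\{0,1\} \cap (t_1,t_2) = \emptyset$, so $\Gamma^*$ is the desired path in $\ipaths$.
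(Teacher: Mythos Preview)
Your proposal is correct and follows essentially the same approach as the paper: trim off the paused endpoint interval and affinely reparameterize the surviving portion back onto $[0,1]$ (your formula $\apath_t(b_t + s(1-b_t))$ is exactly the paper's intended map for the case $a=0$). In fact your discussion of continuity at the joints $t_1,t_2$---including the tent-function fallback---is more careful than the paper's, which simply asserts that the resulting $\Gamma^*$ is continuous without further comment.
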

\begin{proof}
    \journal{check this proof for `can' and `it is'}
    We use the same idea as in \lemref{thepause}, but instead stretch the unit interval
    into only one side of the original domain of the path $\Gamma_t$.
    That is:
    \begin{equation}\label{eq:reparam-closed-1}
        \Gamma_t^*(x) :=
        \begin{cases}
            \Gamma_t(x\cdot a) & \text{if } b=1\\
            \Gamma_t((x-b)\cdot(1-b) + b) & \text{if } a=0
        \end{cases}
    \end{equation}

    If $\Gamma_t$ pauses on $[a,1]$, we know that the we can focus on the image
    of~$[0,a]$.
    \journal{previous sentence needs to be more precise}
    Likewise, if~$\Gamma_t$ pauses on $[0,b]$,
    we turn to the image of~$[b,1]$.
    Then, replace $\Gamma_t$ that pauses with the newly defined $\Gamma^*_t$.
    And so, we define a new map~$\Gamma^* \colon [0,1] \to \cpaths$ as follows:
    \begin{equation}\label{eq:reparam-closed-2}
        \Gamma^*(t) :=
        \begin{cases}
            \Gamma_t & \text{if } t \not\in (t-\eps, t+\delta)\\
            \Gamma^*_t &\text{if } t \in (t-\eps, t+\delta)
        \end{cases}
    \end{equation}

    Indeed, it is easy to verify that each $\Gamma_t^*$ preserves local injectivity so
    $\Gamma_t^* \in \ipaths$.
    Moreover,~$\Gamma^*$ is continuous.
\end{proof}

We now examine the case when linear interpolation
results in a singleton, which causes a degeneracy in spaces of immersions.
We give a maneuver to subvert this for paths.

\begin{lemma}[Dodging Singletons]\label{lem:ole-spinny}
    Let $\eqpath{0}, \eqpath{1} \in \ipaths$, and let $\Gamma:[0,1]\to\cpaths$ be a linear interpolation from $\apath_0$ to $\apath_1$.
    Let $t \in [0,1]$
    such that $\Gamma(t)$ is a constant map, forcing $\Gamma(t)\not \in \ipaths$. We can avoid this total degeneracy
    by rotating $\Gamma(t)$.
\end{lemma}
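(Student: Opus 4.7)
The plan is to locally perturb $\Gamma$ in a small neighborhood of the degenerate time $t^*$, replacing the linear interpolation there with one that inserts a rotation of the endpoint $\apath_1$. The hypothesis $n > 1$ supplies a 2-plane in which to rotate, and the rotation is chosen so as to break the collinearity responsible for the collapse. First I would record the structure of the degeneracy: if $\Gamma(t^*) \equiv q$, the interpolation formula $(1-t^*)\apath_0(s) + t^* \apath_1(s) = q$ yields
\[
\apath_1(s) - q = -\lambda\bigl(\apath_0(s) - q\bigr), \qquad \lambda := \frac{1-t^*}{t^*} > 0,
\]
so $\apath_1$ is a rescaled point-reflection of $\apath_0$ through $q$, and in particular $\apath_0$ and $\apath_1$ are non-constant but span a common affine subspace through $q$.

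Next I would pick a 2-plane $P \subset \R^n$ through $q$ such that no tangent vector of $\apath_0$ lies entirely in $P^{\perp}$, and let $R_\theta \colon \R^n \to \R^n$ denote the rotation by $\theta$ about $q$ acting in $P$ (identity on $P^{\perp}$). Choosing $\eps > 0$ small and a continuous bump $\theta \colon [0,1] \to [0, \theta_0]$ supported on $[t^* - \eps, t^* + \eps]$ with $\theta(t^* \pm \eps) = 0$ and $\theta(t^*) = \theta_0 \in (0, \pi)$, I would define
\[
\Gamma^*(t) := \begin{cases}
\Gamma(t), & t \notin [t^* - \eps, t^* + \eps],\\
(1-t)\apath_0 + t\,R_{\theta(t)} \circ \apath_1, & t \in [t^* - \eps, t^* + \eps].
\end{cases}
\]
Continuity of $\Gamma^*$ follows from continuity of $\theta \mapsto R_\theta$ and from the two cases matching at $t = t^* \pm \eps$ (where $\theta = 0$). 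To verify $\Gamma^*(t) \in \elems\ipaths$ on the modified interval, I would expand at $t = t^*$ using the identity above to obtain
\[
\Gamma^*(t^*)(s) - q = (1-t^*)\bigl(I - R_{\theta_0}\bigr)\bigl(\apath_0(s) - q\bigr),
\]
whose derivative $(1-t^*)(I - R_{\theta_0})\apath_0'(s)$ is nonzero by the choice of $P$, so $\Gamma^*(t^*)$ is an immersion. Local injectivity of $\Gamma^*(t)$ for $t$ near $t^*$ then follows by shrinking $\eps$, since the immersion condition is open in the $C^1$-topology.

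The main obstacle is ensuring that a suitable 2-plane $P$ exists. For $n = 2$ there is nothing to choose and $I - R_{\theta_0}$ is automatically invertible on all of $\R^2$. For $n \geq 3$ the condition ``$\apath_0'(s) \notin P^{\perp}$ for all $s$'' asks that $P^{\perp}$ avoid the tangent image of $\apath_0$ in the Grassmannian $\mathrm{Gr}(n-2,n)$; a dimension count gives generic existence. If this fails for some atypical $\apath_0$, I would subdivide the domain into finitely many sub-arcs on which a fresh $P$ works and apply the maneuver arc-by-arc, patching up any induced pauses at the seams via \lemref{thepause}.
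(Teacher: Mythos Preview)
Your approach and the paper's share the core idea (insert a rotation to break the collapse) but implement it differently. The paper shrinks along the original interpolation until just before the degenerate time, then rotates the \emph{intermediate} path $\Gamma(t-\eps)$ by a full half-turn $\pi$ so that its orientation now agrees with that of $\apath_1$; after this flip, straight-line interpolation to $\apath_1$ no longer passes through a constant map. You instead keep the linear interpolation but rotate the \emph{target} $\apath_1$ by a bump $\theta(t)\in[0,\theta_0]$ with $\theta_0<\pi$, supported on a small neighborhood of $t^*$. Your route is more local and comes with the clean linear-algebra identity $\Gamma^*(t)(s)-q = M_t\bigl(\apath_0(s)-q\bigr)$ with $M_t=(1-t)I - t\lambda R_{\theta(t)}$, from which one reads off immersivity directly; the paper's route is simpler to picture (shrink, spin, regrow) and sidesteps the plane-selection issue you flag for $n\ge 3$.

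One genuine gap: your argument is phrased in $C^1$ language---you invoke $\apath_0'(s)$, tangent vectors avoiding $P^\perp$, and openness of the immersion condition in the $C^1$-topology---but in this paper ``immersion'' means only \emph{locally injective} (local homeomorphism onto the image), with no differentiability assumed. The fix is straightforward and already latent in your computation: for $t\neq t^*$ the matrix $M_t$ is invertible (its restriction to $P$ has determinant $(1-t-t\lambda)^2+2t\lambda(1-t)(1-\cos\theta(t))>0$, and on $P^\perp$ it is the nonzero scalar $1-t-t\lambda$), so $\Gamma^*(t)=q+M_t\circ(\apath_0-q)$ inherits local injectivity from $\apath_0$ without any derivative. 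At $t=t^*$ the kernel of $M_{t^*}$ is $P^\perp$, so what you need is that the orthogonal projection of $\apath_0$ to $P$ is locally injective---this is the correct $C^0$ replacement for ``no tangent vector lies in $P^\perp$'', and your dimension/subdivision discussion should be recast in those terms.
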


    \begin{proof}
        Linear interpolation of $\apath_0$ to $\apath_1$
        produces a singleton if the two equivalence classes of paths are colinear with reversed orientation.
        Hence, if $\Gamma_t$ degenerates to a constant map, there exists sufficiently small
        $\epsilon > 0$ to continuously rotate $\Gamma(t- \epsilon)$ by $\pi$
        without forcing $d_{FP}(\Gamma(t), \apath_1)>d_{FP}(\apath_0, \apath_1)$. Thereby reversing the
        orientation of $\Gamma(t + \epsilon)$, and avoiding the constant map for any $\apath_t \in \Gamma_t$.
        See \figref{spin} for an example.
    \end{proof}

    \begin{figure}
        \centering
        \begin{subfigure}[b]{0.3 \textwidth}
            \includegraphics[width=\textwidth, height=6cm]{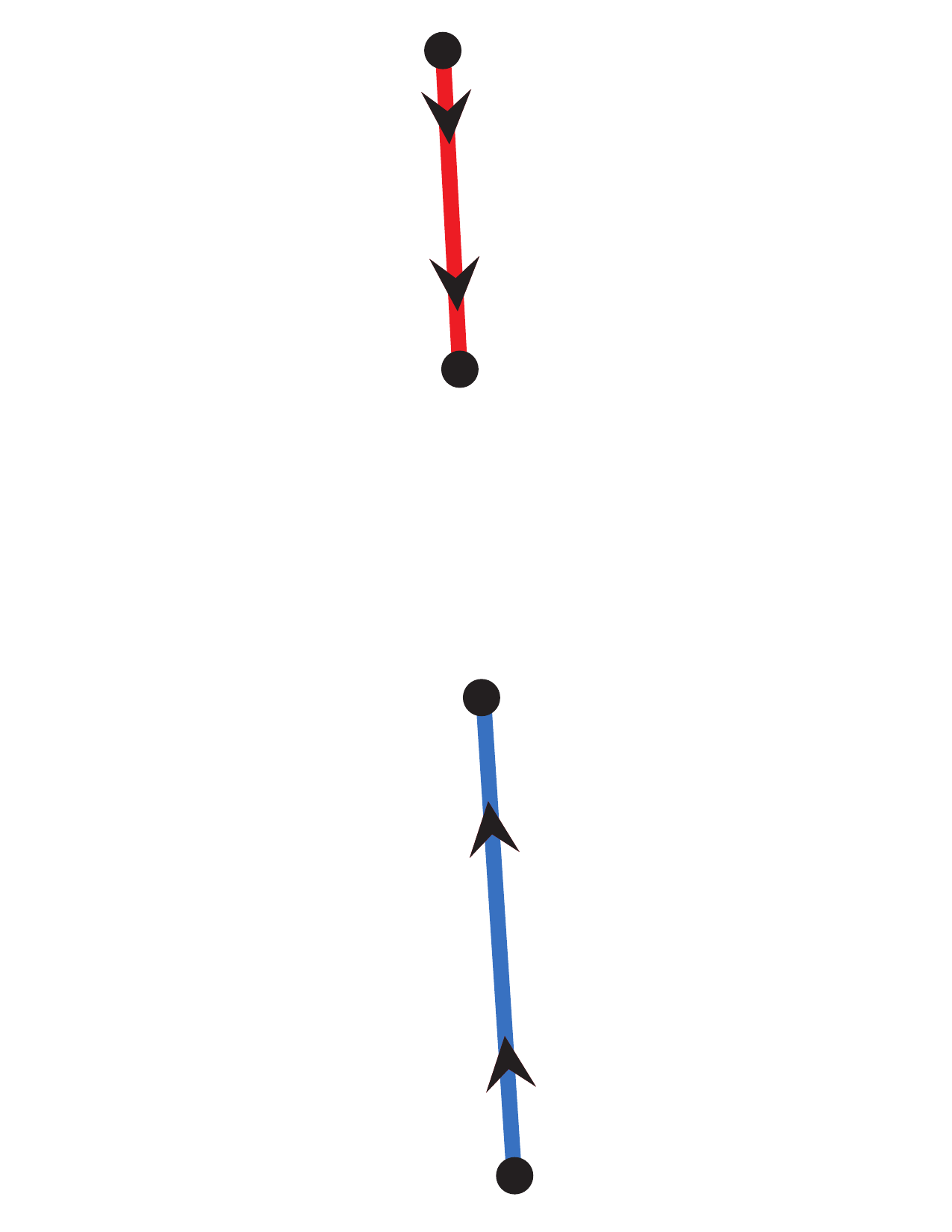}
            \caption{Paths with reversed orientation}
            \label{fig:reverse}
        \end{subfigure}
        \begin{subfigure}[b]{0.3 \textwidth}
        \includegraphics[width=\textwidth, height=6cm]{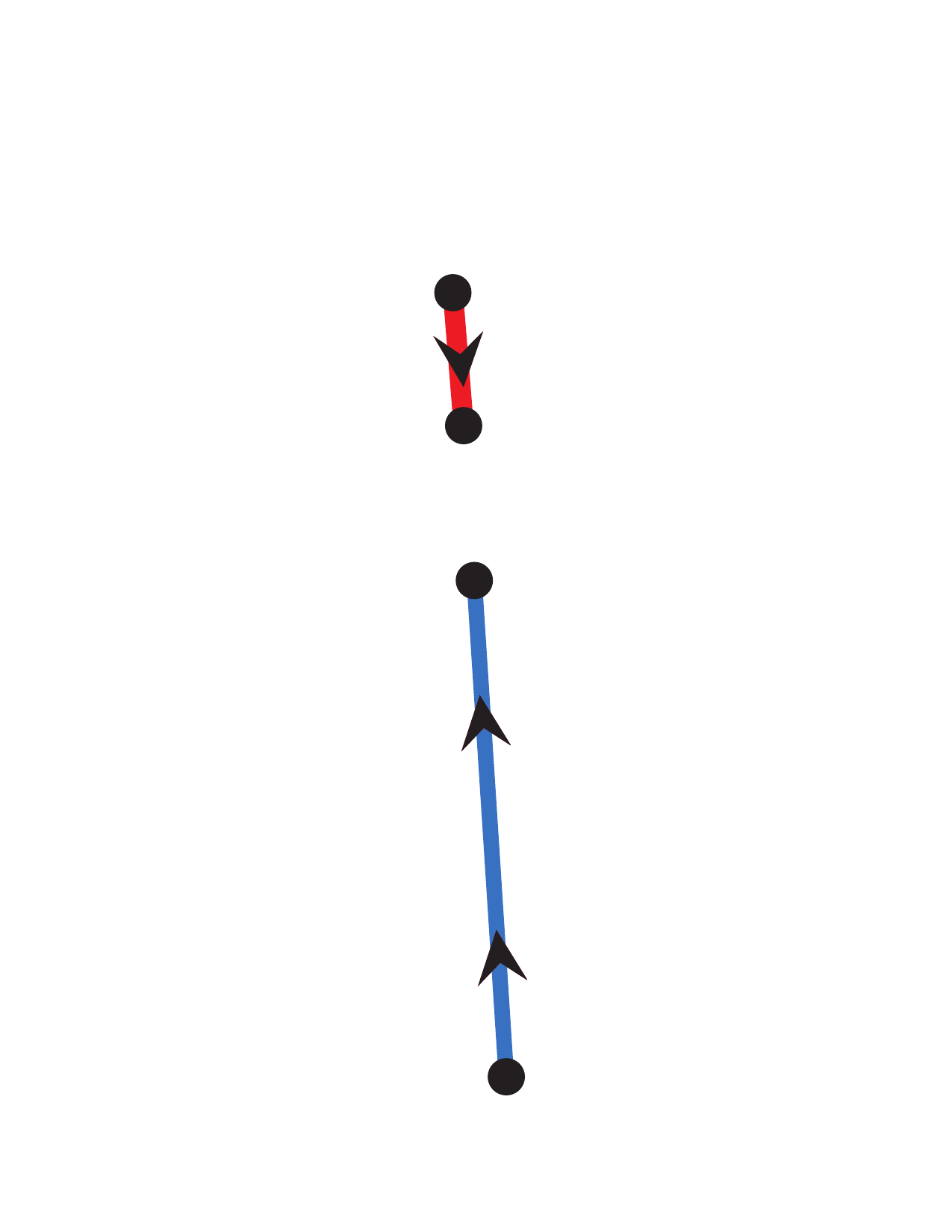}
        \caption{Interpolate}
        \label{fig:shrink}
        \end{subfigure}
        \begin{subfigure}[b]{0.3\textwidth}
            \includegraphics[width=\textwidth, height=6cm]{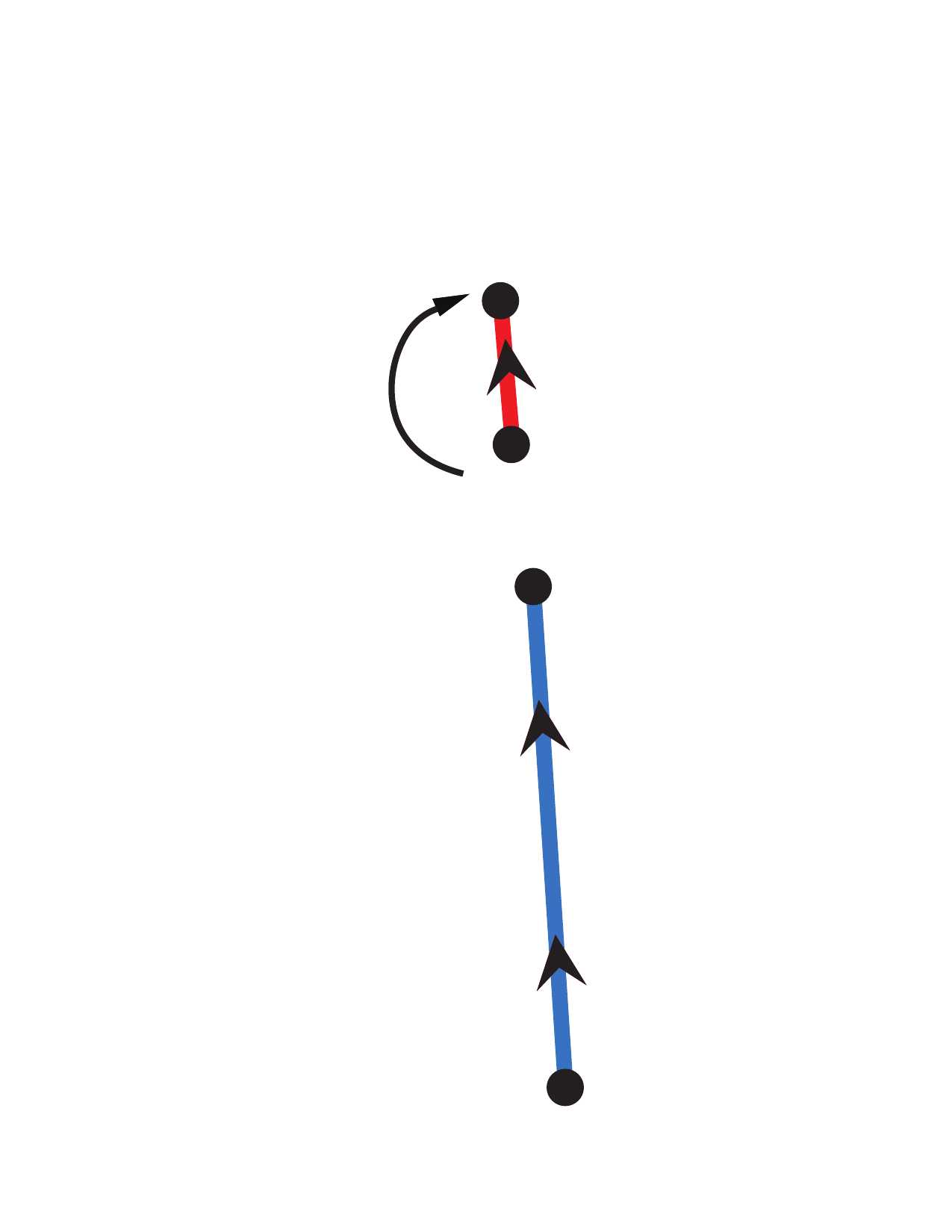}
            \caption{Rotate when sufficiently close}
            \label{fig:spin-180}
        \end{subfigure}
        \caption{
        For colinear paths with opposing orientation, rotating by $\pi$ avoids degenerating to
        the constant map, keeping $\Gamma$ in $\ipaths$. Moreover, rotation with sufficiently small
        Fr\'echet distance maintains the path-connectivity of balls.
        }\label{fig:spin}
    \end{figure}

We now consider the case of backtracking during linear interpolation, which violates local injectivity.
We introduce a maneuver to solve this potential degeneracy in spaces of immersions.

\begin{lemma}[The Q-Tip Maneuver]\label{lem:q-tip}
    Let $\eqpath{0}, \eqpath{1} \in \ipaths$, and let $\Gamma:[0,1]\to\cpaths$ be a linear interpolation from $\apath_0$ to $\apath_1$.
    Let $t \in [0,1]$ such that $\Gamma(t)$ creates backtracking for some $\Gamma(t)$.
    Inflating a ball about the critical backtracking point corrects this
    violation of injectivity.
\end{lemma}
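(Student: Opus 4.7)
The plan is to exploit the extra room in $\R^n$ for $n\geq 2$: replace the offending retracing piece of the path with a small, non-retracing loop around the backtracking point — the \emph{Q-tip} shape — restoring local injectivity while perturbing the image by an arbitrarily small amount. The proof splits naturally into three steps: localize the backtracking, define the inflated replacement, and verify that the modification can be performed continuously in $t$.

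First, I would use \defref{backtrack} to isolate the event: there is a point $x \in [0,1]$ and a $\delta > 0$ such that for every $\epsilon \in (0,\delta)$, one of the images $\apath_t((x-\epsilon,x))$ or $\apath_t((x,x+\epsilon))$ is contained in the other, where $\apath_t := \Gamma(t)$. Writing $p = \apath_t(x)$, I would pick $\epsilon > 0$ small enough that the image $\apath_t((x-\epsilon, x+\epsilon))$ lies in a Euclidean ball $B(p,\rho)$ that meets no other part of $\apath_t([0,1])$. Such an $\epsilon$ exists because outside the isolated backtracking time the path is already an immersion, and away from $x$ the image is at positive distance from $p$ by compactness.

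Next, the inflation itself. Since $n \geq 2$, choose a unit vector $\nu$ normal to the local trajectory at $x$, and pick a radius $\rho' \in (0,\rho)$. Define $\bar\apath_t$ to agree with $\apath_t$ outside $(x-\epsilon, x+\epsilon)$, and on this sub-interval to trace the boundary of a thin elongated loop sitting in the plane spanned by the backtracking direction and $\nu$: roughly, push the incoming half off by $+\rho'\nu$, wrap around the tip $p + \rho'\nu$, and come back along the outgoing half pushed by $-\rho'\nu$. The resulting curve encloses a small "cotton swab" around $p$, is manifestly injective on $(x-\epsilon, x+\epsilon)$, and agrees with the original immersion outside, so $\bar\apath_t \in \elems\ipaths$. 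Every point of the replacement lies within distance $\rho'$ of $p$, so the Hausdorff and hence Fréchet deviation from $\apath_t$ is bounded by $\rho'$, which we may take arbitrarily small.

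Finally, to produce a continuous family near $t$, I would let the radius $\rho'(s)$ and the window half-width $\epsilon(s)$ depend continuously on $s$ and shrink to $0$ at the endpoints of the time interval where backtracking occurs, so that $\bar\Gamma$ glues continuously to $\Gamma$ on either side. The main obstacle is selecting the plane of inflation continuously in $s$: the incoming tangent direction could in principle rotate as $s$ varies. Because $\Gamma$ is a linear interpolation of rectifiable maps, however, the relevant tangent line varies continuously on a neighborhood of the critical time, so a continuous choice of the normal field $\nu(s)$ exists, and the joint continuity of $\bar\Gamma$ follows. This yields a modified family that stays in $\ipaths$ and increases the Fréchet distance by at most $\sup_s \rho'(s)$, which can be made arbitrarily small — exactly the guarantee needed in the proofs of \thmref{path-imm} and \thmref{imm-path-balls}.
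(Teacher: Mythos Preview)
Your argument is correct and follows essentially the same route as the paper: the paper's proof also ``inflates a ball'' around the critical backtracking point to replace the cusp by a small loop (the Q-tip), then lets the inflation vary continuously in $t$ so the modified family glues to $\Gamma$ outside the backtracking interval. Your write-up is in fact more explicit than the paper's sketch---you spell out the choice of a normal direction $\nu$ (which is where $n\ge 2$ enters), the localization via compactness, and the continuous dependence of the inflation plane on $s$---but the underlying maneuver is identical.
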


\begin{proof}
In the scenario of a backtracking event, local injectivity is only violated at the exact critical
point $\Gamma_t(x)$ for $x \in [0,1]$ where backtracking occurs. For sufficiently small $\eps,\delta > 0$,
continuously inflate
a ball of radius $\delta$
about $\Gamma_{t-\eps}(x)$ such that $d_{FP}([\Gamma{t- \epsilon}], \eqpath{1})$ remains fixed, creating
the path $\Gamma^*_t$ with a ball replacing the critical point, so that $\Gamma^*_t \in \elems\ipaths$.
Then, replace any backtracking $\Gamma_t$ with the corresponding $\Gamma^*_t$.
For every $t \in [0,1]$ it holds that $\Gamma_t \in \ipaths$, and by the continuity of the inflation,
$\Gamma$ remains continuous.
For an example of this maneuver, see \figref{q-tip}.
\end{proof}

\begin{figure}
    \centering
    \begin{subfigure}[b]{0.4 \textwidth}
        \centering
        \includegraphics[height=1in]{bad-1}
        \caption{Example path with backtracking}
        \label{fig:backtracking}
    \end{subfigure}
    \begin{subfigure}[b]{0.4 \textwidth}
        \centering
	\includegraphics[height=1in]{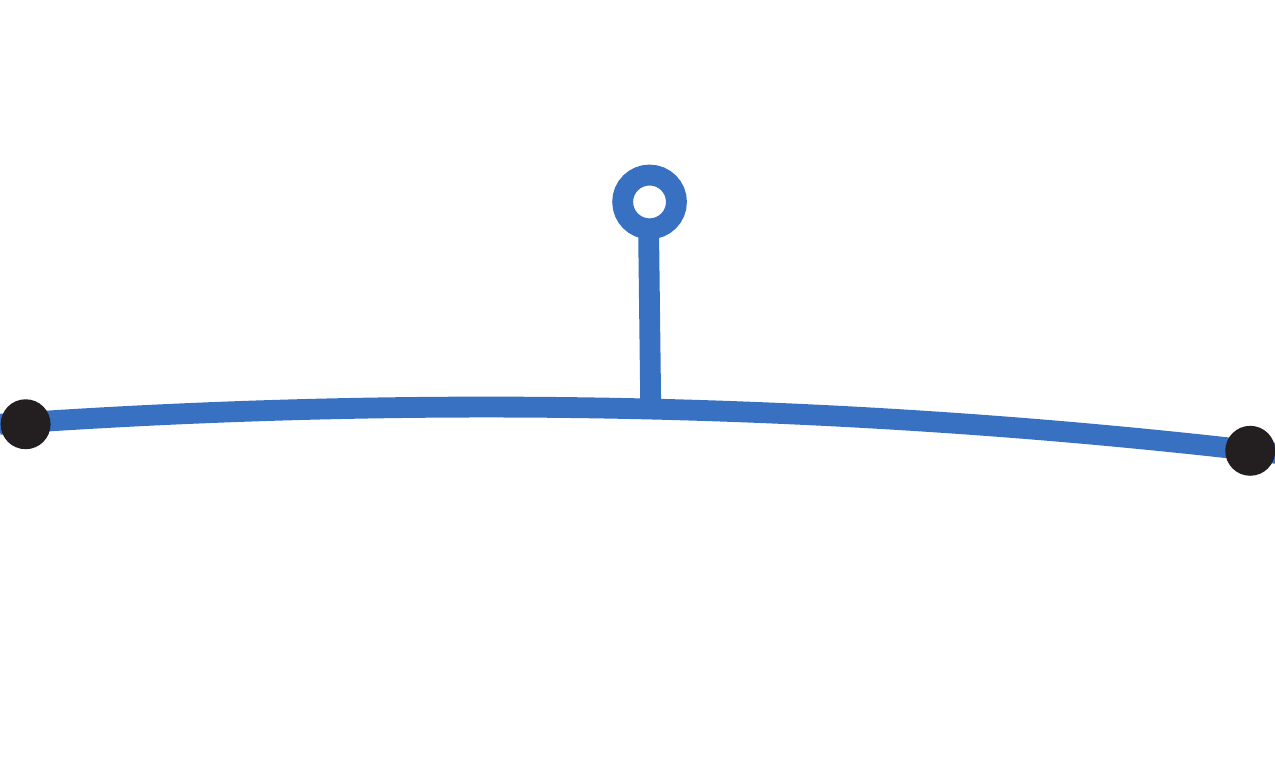}
	\caption{Inflate the critical backtracking point}
	\label{fig:q-tip}
    \end{subfigure}
    \caption{
    Reconcile forced backtracking along a path by inflating a ball about the critical
    backtracking point, thereby maintaining local injectivity.
    }\label{fig:q-tip-full}
\end{figure}

\subsection{Balls of Path Embeddings in Greater Detail}

In what follows, we elaborate on counterexamples for the path-connectivity
of balls in $\epaths$ and $\egraphs$. We begin with a counterexample for path embeddings
in $\R^2$.

We continue with a brief description of counterexamples for the path-connectivity of
embedded paths in $\R^3$.

\begin{lemma}[3d Balls in $\epaths$]\label{lem:balls-bad-dim3}
    If $n=3$, metric balls in the space $(\epaths, d_{FP})$ of embedded paths in
    $\R^n$ are not path-connected.
\end{lemma}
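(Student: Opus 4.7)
The plan is to exhibit two explicit embeddings $\apath_0, \apath_1 \in \elems\epaths$ that lie in a common metric ball of some radius $\delta$, yet cannot be joined by a continuous path in $(\epaths, d_{FP})$ without leaving that ball. Following the hint already given in \thmref{path-emb}, I will use a ``linked loop with a single crossing'' and exploit the fact that in $\R^3$ the over/under state at the crossing is a topological invariant that cannot be changed by small motions of an embedded arc.

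Concretely, fix a large radius $L \gg \delta > 0$. I would parameterize $\apath_0 \colon [0,1] \to \R^3$ so that the first portion of $[0,1]$ traces out most of a circle of radius $L$ in the $xy$-plane (leaving a small gap away from the origin), and the remaining portion traces an arc passing essentially through the origin, crossing the plane of the loop once. At the crossing, let the second strand sit at height $z = +\eta$ for $\apath_0$ and at $z = -\eta$ for $\apath_1$, with everything else identical. Both paths are embeddings, their Fr\'echet distance is at most $2\eta$, and for $2\eta < \delta$ they both lie in $\ball{d_{FP}}{\apath_0}{\delta}$. The main task is to show that no continuous $\Gamma \colon [0,1] \to \epaths$ with $\Gamma(0) = [\apath_0]$, $\Gamma(1) = [\apath_1]$, and $\Gamma_t \in \ball{d_{FP}}{\apath_0}{\delta}$ for all $t$ can exist.

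The argument for the obstruction proceeds as follows. First, because $\Gamma$ is continuous and takes values in embeddings, at every time $t$ the image of the second strand is disjoint from the image of the loop portion. Second, the constraint $\Gamma_t \in \ball{d_{FP}}{\apath_0}{\delta}$ means, by \lemref{hexist}, that there is a reparameterization under which every point of $\Gamma_t$ lies within $\delta \ll L$ of its original position in $\apath_0$; in particular the loop portion is still approximately the circle of radius $L$ (and in particular its small endpoint gap has not moved significantly). Third, track the $z$-coordinate of the image of the midpoint of the second strand under $\Gamma_t$: this is a continuous function of $t$ that goes from $+\eta$ to $-\eta$, so it vanishes at some time $t^*$. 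At $t^*$ the midpoint lies in the $xy$-plane. The embedding condition forbids this point from lying on the loop, so it lies either inside or outside the loop. Combined with the $\delta$-proximity of the loop to the original large circle, one argues that in either case the second strand must be routed ``around'' the nearly-closed loop through the small gap (since the original strand threaded through the interior), forcing it to move by $\Omega(L)$ from its original position and hence pushing $d_{FP}(\Gamma_{t^*}, \apath_0)$ well above $\delta$, contradicting $\Gamma_{t^*} \in \ball{d_{FP}}{\apath_0}{\delta}$.

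The main obstacle is making this ``must route around'' step genuinely rigorous rather than merely geometrically plausible. The cleanest way will likely be a linking-number argument: close up the path $\apath_0$ (and each $\Gamma_t$) by a short external arc connecting its endpoints outside the region of interest, split the resulting closed curve into two sub-loops meeting at the crossing region, and track the linking number (or, equivalently, the $\mathbb{Z}$-valued winding of the second strand's midpoint around the loop in a suitable disk) as a function of $t$. This invariant is $\pm 1$ for $\apath_0$ and $\mp 1$ for $\apath_1$, it is locally constant on configurations where the strand and loop are bounded away from one another, and jumping it requires the strand to approach the loop to within a distance comparable to the minimum clearance, which combined with the $\delta$-ball restriction yields the contradiction. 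Once that invariant is in place, the rest of the proof (together with the analogous $n=1,2$ cases already addressed in the main text) completes the statement.
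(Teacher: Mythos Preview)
Your proposal is correct and follows essentially the same approach as the paper: exhibit two embedded arcs that differ only in the over/under state at a single near-crossing, observe that their Fr\'echet distance is small, and argue that any continuous family of embeddings joining them must route one strand around the other, forcing a displacement large compared to the ball radius. The paper's own proof is a short sketch that simply asserts this obstruction (with long ``tails'' of length $2\delta$ playing the role of your large loop radius $L$); your linking-number formalization is more rigorous but not different in spirit.
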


\begin{proof}
Metric balls are not in general path-connected in three dimensions. For a simple counterexample,
suppose $\apath_0$ comprises a loop in $\R^3$, where a segment crossed on top of itself, avoiding
self-intersection by some small distance $\delta$, with long tails at either
end of the crossing of length $2\delta$. Suppose also that $\apath_1$ comprises the mirror image of
$\apath_0$. Then, $d_{FP} = \delta$, but it is not possible to construct a path from $\apath_0$ to
$\apath_1$ without increasing the Fr\'echet distance between the two, since $\apath_0$ must conduct a self-crossing,
which increases the Fr\'echet distance by at least $2\delta$. Again, see \figref{dimembed}
\end{proof}

We conclude with additional details demonstrating the path-connectivity of balls for
embedded paths in $\R^4$ or higher.

\begin{lemma}[Balls in $(\epaths, d_{FP})$, $n \geq 4$]\label{lem:embed-path-balls}
    If $n \geq 4$, balls in the metric space of embedded paths
    $(\epaths, d_{FP})$ in $\R^n$ are path-connected.
\end{lemma}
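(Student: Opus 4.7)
The plan is to take the path in $\ipaths$ constructed in the proof of \thmref{imm-path-balls} between $\eqpath{0}$ and $\eqpath{1}$ and modify it, using the extra room afforded by $n \geq 4$, so that it lies entirely in $\epaths$ while staying inside the ball $\ball{d_{FP}}{\apath}{\delta}$. First, as in \thmref{imm-path-balls}, linear interpolation $\Gamma_t = (1-t)\apath_0 + t(\apath_1\circ h_*)$ together with the repair moves of \lemref{thepause}, \lemref{ole-spinny}, and \lemref{q-tip} produces a continuous path in $\ipaths$ from $\apath_0$ to $\apath_1$ whose Fr\'echet distance to $\apath$ stays strictly below $\delta$ at every time $t$. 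So the only obstruction to landing in $\epaths$ is the possible appearance of a finite set of self-intersection times $S \subset (0,1)$ at which $\Gamma_t$ fails to be injective.

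Next I would use a dimension count. Generically, the image of a path $\Gamma_t\colon [0,1] \to \R^n$ is one-dimensional, so the expected dimension of the self-intersection set of two pieces of $\Gamma_t$ in $\R^n$ is $1 + 1 - n = 2 - n$, which is at most $-2$ when $n \geq 4$. Hence for $n \geq 4$, self-intersections are non-generic and can be removed by arbitrarily small ambient perturbations. Concretely, at each critical time $s \in S$ where a self-crossing occurs at a point $p = \Gamma_s(x) = \Gamma_s(y)$ with $x \neq y$, I pick a direction $v \in \R^n$ normal to both tangent directions of $\Gamma_s$ at $x$ and $y$ (this is possible since $n - 2 \geq 2 > 0$) and, on a small time window $(s - \eta, s + \eta)$, I add a bump $\eta \cdot \rho(t) \cdot \sigma(u) \cdot v$ supported in a neighborhood of $y$ in the parameter interval, where $\rho$ and $\sigma$ are smooth cutoffs. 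Taking $\eta$ smaller than both the current slack $\delta - \sup_t d_{FP}(\Gamma_t, \apath)$ and any minimum distance between distinct crossing loci guarantees the perturbed path is an embedding at $s$, varies continuously in $t$, and still lies in $\ball{d_{FP}}{\apath}{\delta}$.

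I would then patch all the finitely many (or discretely many) critical times simultaneously by summing disjoint bumps, invoking compactness of $[0,1]$ to ensure the whole path $\Gamma$ is covered by finitely many such windows. Since each bump increases the Fr\'echet distance to $\apath$ by at most $\eta$, and $\eta$ can be chosen uniformly smaller than the slack, the perturbed $\Gamma^\star \colon [0,1] \to \epaths$ is a continuous path inside $\ball{d_{FP}}{\apath}{\delta} \cap \epaths$ from $\apath_0$ to $\apath_1$, which establishes path-connectedness of the ball.

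The main obstacle I expect is making the perturbation argument uniform: one has to verify that the self-intersection set in space-time $\{(t,x,y) : \Gamma_t(x) = \Gamma_t(y),\ x \neq y\}$ is a closed lower-dimensional subset that can be covered by finitely many local patches, and that the chosen bumps from different patches do not reintroduce crossings with one another. Handling this cleanly presumably requires either a genericity lemma (perturb $\Gamma$ into general position first, by a fourth-dimension argument analogous to the proof that every smooth knot in $\R^4$ is unknotted) or a careful induction on the connected components of the crossing locus, and this is the step where the hypothesis $n \geq 4$ does the real work.
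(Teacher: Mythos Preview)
Your proposal is correct and follows essentially the same strategy as the paper: start from the linear-interpolation path (with the immersion repairs already in hand), then at each self-crossing time perturb into the extra dimensions by an amount controlled by the slack so that the path stays embedded and remains inside the ball. The paper phrases this as projecting to $\R^3$ and pushing into the fourth coordinate by at most $\epsilon/2$, while you use a normal direction and a dimension count; the content is the same, and your explicit acknowledgment of the uniformity/general-position issue is, if anything, more careful than the paper's treatment.
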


\begin{proof}
    Let $\eqpath{0}, \eqpath{1}, \eqpath{2} \in \epaths$ in the ambient space $\R^n$, for $n \geq 4$.
    Let $\delta >0$, and $\B := \ball{d_{FG}}{\eqpath{0}}{\delta}\subset \epaths$.
    Since all topological knots are represented equivalently in only three dimensions,
    \journal{BTF: this sentence confuses me}
    without loss of generality, we consider the projections of every $\apath_0 \in \eqpath{0},
    \apath_1 \in \eqpath{1}$, and $\apath_2 \in \eqpath{2}$
    in $\R^3$. Construct a continuous
    $\Gamma: [0,1] \to \epaths$ by the linear interpolation from $\Gamma(0) =
    \apath_1$ to $\Gamma(1) = \apath_2$. By the rectifiability of the embeddings
    $\apath_1$ and $\apath_2$, the interpolation must reduce $d_{FP}(\apath_1, \apath_2)$ by some
    $\epsilon > 0$ before a self-crossing is required in the image of $\Gamma_t$
    at some $t \in [0,1]$.

    At $t$, conduct a self-crossing by perturbing $\Gamma_t$ in the fourth dimension by
    no more than $\epsilon/2$. This increases $d_{FP}(\Gamma_t, \apath_2)$ by no more than $\epsilon/2$.
    Hence, $d_{FP}(\Gamma_t, \apath_2)$ is either strictly decreasing as $t \to 1$, or necessarily satisfies
    $d_{FP}(\Gamma_t, \apath_2) \leq \delta - \epsilon/2$ for $\epsilon > 0$. This is to say, for all
    $t \in [0,1]$, $d_{FP}(\Gamma_t, \apath_2) \leq \delta$, and $\Gamma_t \in \B$. Hence, metric balls in the space are
    path-connected.
\end{proof}

\end{document}